\providecommand{\norm}[1]{\lVert#1\rVert}
\newcommand{\ds}{\displaystyle}
\renewcommand{\qed}{\hfill $\Box$ \medskip}
\def\No{\mathbf{N}^{(1)}}
\def\Nt{\mathbf{N}^{(2)}}
\def\norm#1{{\left\|#1\right\|}}
\def\nm{\noalign{\medskip}}
\def\Tau{\mathcal{J}}
\def\Tauo{\Tau^{(1)}}
\def\Tauoj{\Tau^{(j)}}
\def\Taut{\Tau^{(2)}}
\def\Scl{\mathcal{S}}
\def\Sclo{\Scl^{(1)}}
\def\Scloj{\Scl^{(j)}}
\def\Sclt{\Scl^{(2)}}
\def\Dcrp{\mathcal{I}}
\def\Dcrpo{\Dcrp^{(1)}}
\def\Dcrpt{\Dcrp^{(2)}}
\providecommand{\tabularnewline}{\\}
\theoremstyle{plain}
\newtheorem{thm}{\protect\theoremname}
  \theoremstyle{plain}
  \newtheorem{lem}[thm]{\protect\lemmaname}
  \theoremstyle{plain}
  \newtheorem{prop}[thm]{\protect\propositionname}
\date{}
  \providecommand{\lemmaname}{Lemma}
  \providecommand{\propositionname}{Proposition}
\providecommand{\theoremname}{Theorem}
\begin{document}

\title{Shape recognition and classification in electro-sensing}
\author{Habib Ammari\affil{1}{Department of Mathematics and Applications,
    Ecole Normale Sup\'erieure, 45 Rue d'Ulm, 75005 Paris, France
    (habib.ammari@ens.fr, boulier@dma.ens.fr,
    han.wang@ens.fr)},  Thomas Boulier\affil{1}{},
  Josselin Garnier\affil{2}{Laboratoire de
    Probabilit\'es et Mod\`eles Al\'eatoires \& Laboratoire
    Jacques-Louis Lions, Universit\'e Paris VII, 75205 Paris Cedex 13,
    France (garnier@math.jussieu.fr).},  \and Han Wang\affil{1}{}}

\maketitle

\begin{article}
\begin{abstract}
This paper aims at advancing the field of electro-sensing. It
exhibits the physical mechanism underlying shape perception for
weakly electric fish. These fish orient themselves at night in
complete darkness by employing their active electrolocation
system. They generate a stable, high-frequency, weak electric
field and perceive the transdermal potential modulations caused by
a nearby target with different admittivity than the surrounding
water. In this paper, we explain how weakly electric fish might
identify and classify a target, knowing by advance that the latter
belongs to a certain collection of shapes. The fish is able to
learn how to identify certain targets and discriminate them from
all other targets. Our model of the weakly electric fish relies on
differential imaging, {\it i.e.}, by forming an image from the
perturbations of the field due to targets, and physics-based
classification. The electric fish would first locate the target
using  a specific location search algorithm. Then it could
extract, from the perturbations of the electric field, generalized
(or high-order) polarization tensors of the target. Computing,
from the extracted features, invariants under rigid motions and
scaling yields shape descriptors. The weakly electric fish might
classify a target by comparing its invariants with those of a set
of learned shapes. On the other hand, when measurements are taken
at multiple frequencies, the fish  might exploit the shifts and
use the spectral content of the generalized polarization tensors
to dramatically improve  the stability with respect to measurement
noise of the classification procedure in electro-sensing.
Surprisingly, it turns out that the first-order polarization
tensor at multiple frequencies could be enough for the purpose of
classification. A procedure to eliminate the background field in
the case where the permittivity of the surrounding medium can be
neglected, and hence improve further the stability of the
classification process,  is also discussed.
\end{abstract}

\keywords{weakly electric fish | electrolocation | shape
classification | generalized polarization tensors | shape
descriptors | spectral induced polarization}

\section{Introduction}

In the turbid rivers of Africa and South America, some species of
fish generate a stable, high frequency, weak electric field
(0.1-10 kHz, $\leq 100$ mV/cm)  which is not enough for defense
purpose. In 1958, Lissmann and Machin \cite{lissmann1958mechanism}
discovered that the emitted electrical signal is in fact used for
active electro-sensing. The weakly electric fish have thousands of
receptor organs at the surface of their skin. A nearby target with
different admittivity than the surrounding water is detected by
measurements of the electric organ discharge modulations at the
receptor organs \cite{bastian, moller1995}. Targets with large
permittivity cause appreciable phase shifts, which can be measured
by receptors called T-type units \cite{book_nelson}. It is an
important input for the fish, and thus it will be the central
point in this paper for shape classification.

Active electro-sensing has driven an increasing number of
experimental, behavioral, biological, and computational studies
since Lissmann and Machin's work
\cite{assad1997electric,babineau2006modeling, budelli2000electric,
chen2005modeling,maciver2001computational,maciver2001prey,rasnow1989simulation,von1993electric}.
Behavioral experiments have shown that weakly electric fish are
able to locate a target \cite{von1993electric} and discriminate
between targets with different shapes \cite{von2007distance}
or/and electric parameters (conductivity and permittivity)
\cite{von1999active}. The growing interest in electro-sensing
could be explained not only by the curiosity of discovering a
``sixth sense'', the electric perception, that is not accessible
by our own senses, but also by potential bio-inspired applications
in underwater robotics. It is challenging to equip robots with
electric perception and provide them, by mimicking weakly electric
fish, with imaging and classification capabilities in dark or
turbid environments \cite{mciver3, boyer, boyer3, boyer2, mciver2,
mciver, von1999active}.

Mathematically speaking, the problem is to locate the target and
identify its shape and material parameters given the current
distribution over the skin.  Due to the fundamental ill-posedness
character of this imaging problem, it is very intriguing to see
how much information weakly electric fish are  able to recover.
The electric field due to the target is a complicated highly
nonlinear function of its shape, admittivity, and distance from
the fish. Thus, understanding analytically this electric sense is
likely to give us insights in this regard \cite{
assad1997electric,babineau2006modeling, boyer4,
budelli2000electric, maciver2001computational,boyer2,
von2007distance}. While locating targets from the electric field
perturbations induced on the skin of the fish is now understood
(see \cite{electroloc, boyer3} and references therein),
identifying and classifying their shapes is considered to be one
of the most challenging problems in electro-sensing. Although the
neuroethology of these fish has significantly been advanced last
years (see \cite{neuro} and references therein), the neural
mechanisms encoding the shape of a target is far beyond the scope
of our study. Rather, this work focuses on the physical
feasibility of such a process, which was not explained until now.

In \cite{electroloc}, a rigorous model for the electro-location of
a target around the fish was derived. Using the fact that the
electric current produced by the electric organ is time-harmonic
with a known fundamental frequency, a space-frequency location
search algorithm was introduced. Its robustness with respect to
measurement noise and its sensitivity with respect to the number
of frequencies, the number of sensors, and the distance to the
target were illustrated. In the case of disk- and ellipse-shaped
targets,  the conductivity, the permittivity, and the size of the
targets can be reconstructed separately from multifrequency
measurements. Such measurements have been used successfully in
trans-admittance scanners of breast tumors \cite{seo1,seo2,
scholz2002towards}.

In this paper, we tackle the challenging problem of shape
recognition and classification. In order to explain how the shape
information is encoded in measured data, we first derive a
multipolar expansion for the perturbations of the electric field
induced by a nearby target in terms of the characteristic size of
the target. Our asymptotic expansion generalizes Rasnow's formula
\cite{rasnowformula} in two directions: (i) it is a higher-order
approximation of the effect of a nearby target and it is valid for
an arbitrary shape and admittivity contrast and (ii) it takes also
into account the fish's body. As it has been first proved in
\cite{electroloc}, by postprocessing the measured data using layer
potentials associated only to the fish's body, one can reduce the
multipolar formula to the one in free space, {\it i.e.}, without
the fish. Then we show how to identify and classify a target,
knowing by advance that the latter belongs to a dictionary of
pre-computed shapes. The shapes considered in this paper have been
experimentally tested and results reported in \cite{gerhard}. This
idea comes naturally in mind when modeling behavioral experiments
such as in \cite{von1999active,von2007distance,von1993electric}.
The pre-computed shapes would then be a model for the fish's
memory (trained to recognize specific shapes), and the experience
of recognition presented here would simulate the discrimination
exercises that are then imposed to them. We develop two algorithms
for shape classification: one based on shape descriptors while the
second is based on spectral induced polarizations. We first
extract, from the data, generalized (or high-order) polarization
tensors of the target (GPTs) \cite{dico}. These tensors, first
introduced in \cite{ammarisima02},  are intrinsic geometric
quantities and constitute the right class of features to represent
the target shapes \cite{AGKLY11, ammarikang2007GPT}. The shape
features are encoded in the high-order polarization tensors. The
extraction of the GPTs can be achieved by a least-squares method.
The noise level in the reconstructed generalized polarization
tensors depends on the angle of view. Larger is the angle of view,
more stable is the reconstruction. $l_1$-regularization techniques
could be used.  Then we compute from the extracted features
invariants under rigid motions and scaling. Comparing these
invariants with those in a dictionary of pre-computed shapes, we
successfully classify the target. Since the measurements are taken
at multiple frequencies, we make use of the spectral content of
the generalized polarization tensor in order to dramatically
improve the stability with respect to measurement noise of the
physics-based classification procedure. In fact, we show
numerically that the first-order polarization tensor at multiple
frequencies is enough for the purpose of classification.

\section{Feature extraction from induced current measurements}

\subsection{Electro-sensing model}
Let us recall the nondimensionalized model of electro-sensing: the
body of the fish is $\Omega$ (of size of order $1$), an open
bounded set in $\mathbb{R}^d, d=2,3$, of class
$\mathcal{C}^{1,\alpha}$, $0<\alpha<1$, with outward normal unit
vector $\nu$. The electric organ is a dipole $f$ placed at
$z_{0}\in\Omega$ or a sum of point sources inside $\Omega$
satisfying the neutrality condition. We refer to \cite{electroloc}
where the equations are nondimensionalized and the different
scales are identified. The fish's skin is very thin and highly
resistive. Its effective thickness, that is, the skin thickness
times the contrast between the water and the skin conductivities,
is denoted by $\xi$ and is of order $10^{-1}$
\cite{assad1997electric}. We assume that the conductivity of the
background medium is $1$ and that its permittivity is vanishing.
Consider a target $D=z+\delta B$, where $\delta \ll 1$ is the
characteristic size of $D$, $z$ is its location, and $B$ a smooth
bounded domain containing the origin. We assume that $D$ is of
complex admittivity $k=\sigma+i\varepsilon\omega$, with $\omega$
being the operating frequency in the range $[1,10]$ and $\sigma$
and $\varepsilon$ being respectively the conductivity and the
 permittivity of the target. It has been also shown in~\cite{electroloc} that, in the
presence of $D$,
 the electric
field $u$ generated by the fish is the solution of the following
system:
\begin{equation}
\left\{ \begin{alignedat}{2}\Delta u & =f & \,\, \mbox{in } \Omega,\\
\nabla\cdot(1+(k-1)\chi_{D})\nabla u & =0 & \,\,   \mbox{in } \mathbb{R}^{d}\setminus\overline{\Omega},\\
\left.\frac{\partial u}{\partial\nu}\right|_{-} & =0 & \,\, \mbox{on } \partial\Omega,\\
{} u |_+ - u|_- & =\xi\left.\frac{\partial u}{\partial\nu}\right|_{+} & \,\, \mbox{on } \partial\Omega\\
\left|u(x)\right| & =O(\vert x\vert^{-d+1}), &
\,\,\left|x\right|\rightarrow\infty.
\end{alignedat}
\right.\label{eq:system-u}
\end{equation}
Here, $\chi_{D}$ is the characteristic function of $D$.
Fig.~\ref{figmodel} shows
 isopotentials with and without a target with zero permittivity but different conductivity from the surrounding
medium. Note that if the target's admittivity depends on the
frequency ({\it i.e.}, if the permittivity is nonzero), then a
phase shift in the electrical potential is induced.

\begin{figure}[!h]
\centering%
\begin{tabular}{cc}
\includegraphics[width=4cm]{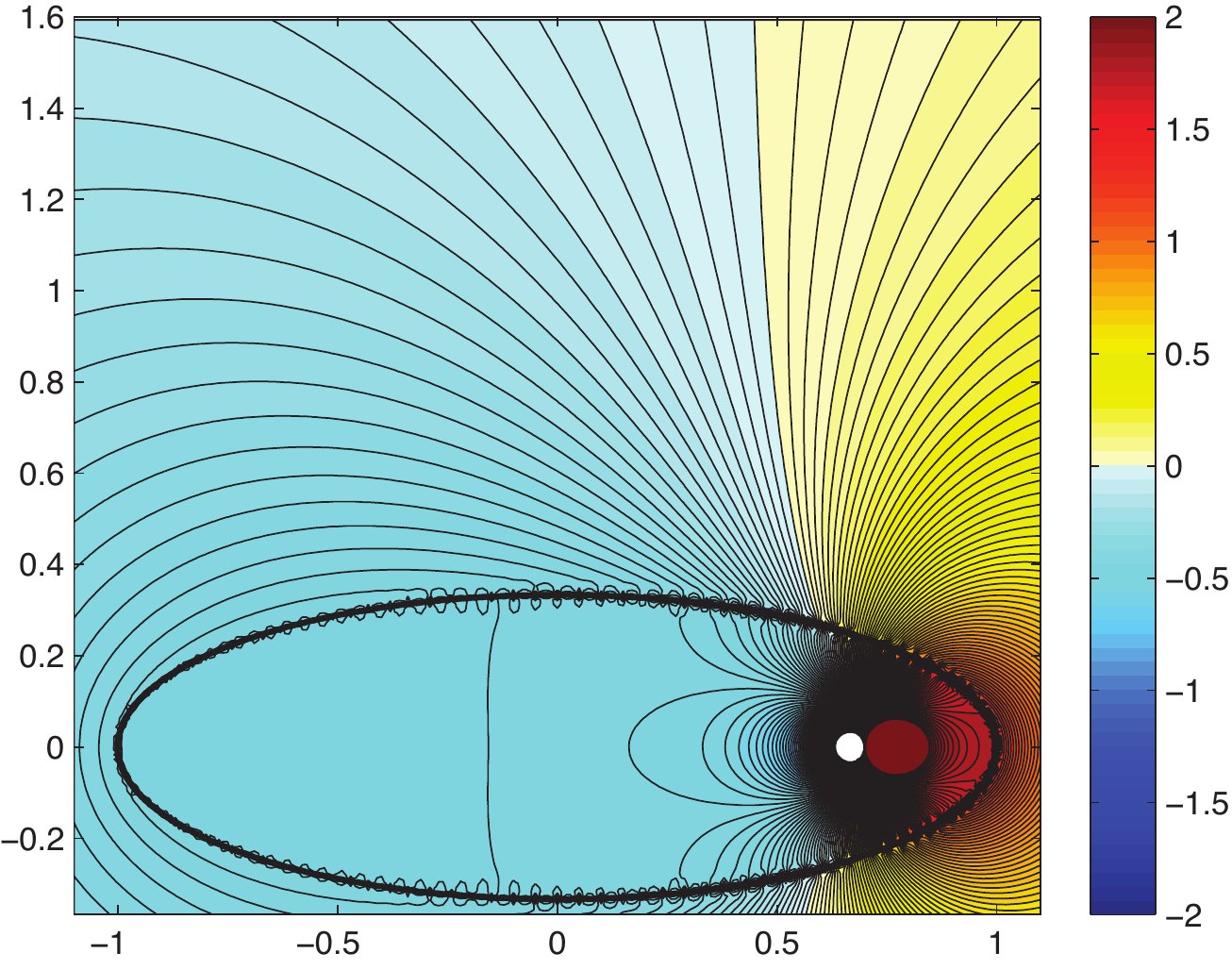} & \includegraphics[width=4cm]{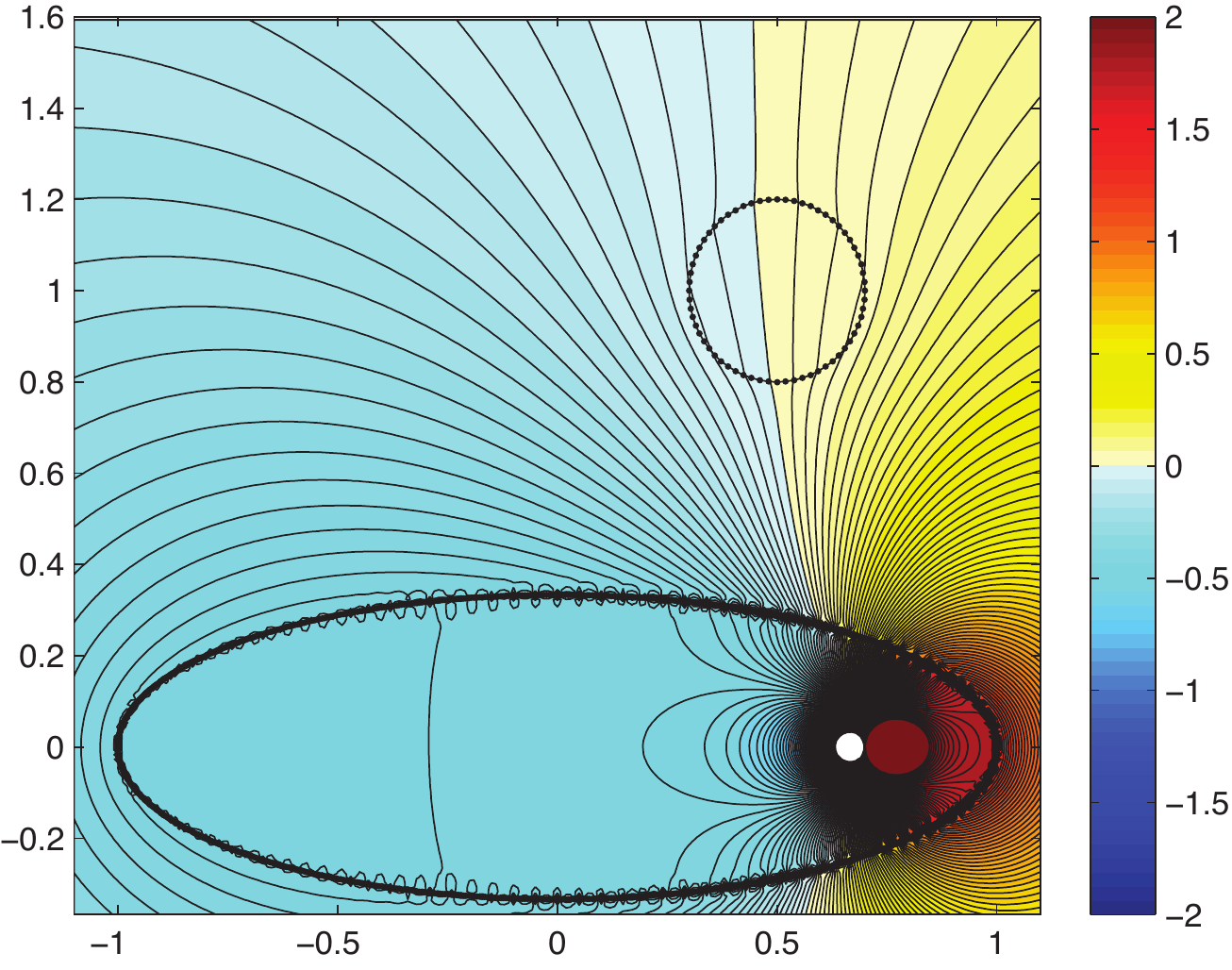}\tabularnewline
(a) & (b)\tabularnewline
\end{tabular}
\caption{\label{figmodel}Isopotentials without (a) and with (b) a
target with $\sigma=5$ and $\varepsilon=0$.}
\end{figure}

In a previous study~\cite{dico}, we have extracted the GPTs of a
target by multistatic measurements using arrays of sources and
receptors. These GPTs were then arranged and compared to a
dictionary of already known shapes. This study aims at adapting
this method to the electro-sensing problem.


\subsection{Asymptotic formalism} \label{sec:reconstruction-GPT}

The first step is to compute the GPTs from the measurements. In
this regard, the next result will be useful. Except when
mentioned, we will fix in this section the frequency~$\omega$,
leading to a fixed complex admittivity~$k$. We will need the
following notation. For a multi-index $\alpha \in \mathbb{N}^d$,
let $x^\alpha = x_1^{\alpha_1} \ldots x_d^{\alpha_d},
\partial^\alpha = \partial_1^{\alpha_1} \ldots
\partial_d^{\alpha_d},$ with $\partial_j = \partial /
\partial x_j$. Let $G(x)$ be the Green function of the Laplacian
in $\mathbb{R}^d$ which satisfies $\Delta G =\delta$ (where
$\delta$ is the Dirac function at the origin) and is given by
$$
G(x) = \left\{
\begin{array}{ll}
\ds \frac{1}{2\pi} \ln |x|, \quad & d=2 , \\
\nm \ds -\frac{1}{4\pi} \frac{1}{|x|}, \quad & d=3.
\end{array}
\right.
$$
We denote the single and double layer potentials of a function
$\phi \in L^2(\partial \Omega)$ as $\mathcal{S}_\Omega[\phi]$ and
$\mathcal{D}_\Omega[\phi]$, where
\begin{align} \label{defs}
\mathcal{S}_\Omega[\phi](x) &:= \int_{\partial \Omega} G(x-y) \phi
(y) \, d \sigma(y), \quad x \in \mathbb{R}^d,
\end{align}
and
\begin{equation} \label{defd}
\mathcal{D}_\Omega[\phi](x) := \int_{\partial \Omega}
\frac{\partial G}{\partial  {\nu(y)}} (x-y) \phi (y) \, d
\sigma(y), \quad x \in \mathbb{R}^d \setminus
\partial \Omega.
\end{equation}
 We also define the  boundary integral operator $\mathcal{K}^*_\Omega$ on
$L^2(\partial \Omega)$ by
\begin{equation} \label{defk}
\mathcal{K}_\Omega^*[\phi](x) := \int_{\partial \Omega} \frac
{\partial G}{\partial {\nu(x)}}(x-y) \phi (y)\,d\sigma(y), \quad
\phi \in L^2(\partial \Omega).
 \end{equation}
The operator $\mathcal{K}^*_\Omega$ is called the
Neumann-Poincar\'e operator. We assume that the target is away
from the fish, {\it i.e.}, the distance between the fish and the
target is much larger than the target's characteristic  size but
smaller than the range of the electrolocation which does not
exceed two fish's body lengths. The following theorem holds.
\begin{thm}
\label{thm:asymptotic-formula}Let us define the function $H:\mathbb{R}^{d}\rightarrow\mathbb{C}$
by
\begin{equation}
H(x)=p(x)+\mathcal{S}_{\Omega}\left[\left.\frac{\partial
u}{\partial\nu}\right|_{+}\right]-\xi\mathcal{D}_{\Omega}\left[\left.\frac{\partial
u}{\partial\nu}\right|_{+}\right],\label{eq:H-def}
\end{equation}
where $p$ is the field created by the dipole $f$, {\it i.e.},
$\Delta p=f$ in $\mathbb{R}^{d}$. Then, for every integer
$K\geq1$, the following expansion holds
\begin{equation} \begin{array}{l}
\ds
u(x)=H(x)+\delta^{d-2}\sum_{\left|\alpha\right|=1}^{K}\sum_{\left|\beta\right|=1}^{K-\left|\alpha\right|+1}
\frac{(-1)^{\vert\beta\vert}\delta^{\left|\alpha\right|+\left|\beta\right|}}{\alpha!\beta!}
\partial^{\alpha}H(z) \\ \nm \qquad \qquad  \ds \times M_{\alpha\beta}(\lambda,B)\partial^{\beta}G(x-z)+O(\delta^{d+K}),
\end{array}\label{eq:asymptotic formula}
\end{equation}
uniformly for $x \in \partial \Omega$, where $z$ is the location
of the target $D$ and
$$M_{\alpha\beta}(\lambda,B):= \int_{\partial B} (\lambda
I-\mathcal{K}_{B}^{*})^{-1}[\frac{\partial y^\alpha}{\partial
\nu}] y^\beta \, d\sigma(y)$$ is the generalized polarization
tensor (of order $(\alpha,\beta)$) associated with the domain $B$
and the contrast $\lambda=(k+1)/2(k-1)$; see
\cite{ammarikang2007GPT}. Here, $\mathcal{K}_{B}^{*}$ is the
Neumann-Poincar\'e operator associated with $B$ and $I$ is the
identity operator.
\end{thm}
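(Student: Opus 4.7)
The plan is to decompose $u$ into a ``fish-plus-background'' field $H$ and a small target-induced perturbation, and then apply the classical layer-potential expansion for small-volume inclusions to the latter.

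First I would identify the structure of $u-H$ away from the target. Using Green's formula in $\mathbb{R}^d\setminus(\overline{\Omega}\cup\overline{D})$, the decay of $u$ at infinity, and the interface conditions on $\partial\Omega$ ($\partial u/\partial\nu|_-=0$ and $u|_+-u|_-=\xi\,\partial u/\partial\nu|_+$), together with the standard jump relations for $\mathcal{S}_\Omega$ and $\mathcal{D}_\Omega$, one checks that $v:=u-H$ extends harmonically across $\partial\Omega$, is harmonic in $\mathbb{R}^d\setminus\overline{D}$, and decays at infinity. Consequently $v=\mathcal{S}_D[\psi_D]$ for some density $\psi_D\in L^2(\partial D)$. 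Imposing the conductivity transmission conditions on $\partial D$ and using that $H$ is smooth across $\partial D$ (since $D$ lies in the interior of the region where $H$ is harmonic) yields the integral equation
\begin{equation*}
(\lambda I-\mathcal{K}_D^*)\psi_D=\left.\tfrac{\partial H}{\partial\nu}\right|_{\partial D},\qquad \lambda=\frac{k+1}{2(k-1)}.
\end{equation*}

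Second I would exploit the scaling $D=z+\delta B$. Under $y=z+\delta\tilde{y}$ the operator $\mathcal{K}_D^*$ becomes $\mathcal{K}_B^*$ on $\partial B$, independent of $\delta$. Since $z$ is separated from $\partial\Omega$ and from the dipole location $z_0$, the function $H$ is harmonic in a neighborhood of $z$, so Taylor expansion gives
\begin{equation*}
\left.\tfrac{\partial H}{\partial\nu}\right|_{\partial D}(z+\delta\tilde{y})=\sum_{|\alpha|\ge 1}\frac{\delta^{|\alpha|-1}}{\alpha!}\,\partial^\alpha H(z)\,\tfrac{\partial\tilde{y}^\alpha}{\partial\nu}.
\end{equation*}
Applying $(\lambda I-\mathcal{K}_B^*)^{-1}$ termwise and substituting into the representation
\begin{equation*}
v(x)=\delta^{d-1}\int_{\partial B}G(x-z-\delta\tilde{y})\,\tilde{\psi}(\tilde{y})\,d\sigma(\tilde{y}),
\end{equation*}
together with the further Taylor expansion $G(x-z-\delta\tilde{y})=\sum_\beta\frac{(-\delta)^{|\beta|}}{\beta!}\tilde{y}^\beta\partial^\beta G(x-z)$, one collects powers of $\delta$ and recognizes the resulting surface integrals as $M_{\alpha\beta}(\lambda,B)$. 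The $\delta^{d-2}$ prefactor emerges from combining the Jacobian $\delta^{d-1}$ with the $\delta^{-1}$ produced by the gradient acting on $H$; the restriction $|\beta|\ge 1$ reflects the vanishing of the induced monopole, which follows from the harmonicity of $H$ near $z$ (giving $\int_{\partial B}\partial\tilde{y}^\alpha/\partial\nu\,d\sigma=0$ for $|\alpha|=1$).

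The main obstacle is the uniform remainder control for $x\in\partial\Omega$. One must bound the truncated Taylor remainders of $H$ and $G$ in the $L^2(\partial B)$-norm, then invoke (i) the invertibility of $\lambda I-\mathcal{K}_B^*$ on $L^2(\partial B)$---guaranteed because $\mathrm{Im}(k)\neq 0$ places $\lambda$ off the spectrum of $\mathcal{K}_B^*$, with an operator-norm bound depending only on $B$ and on $\mathrm{dist}(\lambda,\sigma(\mathcal{K}_B^*))$---and (ii) the $L^2(\partial D)\to L^\infty(\partial\Omega)$ mapping property of $\mathcal{S}_D$, valid because $\mathrm{dist}(D,\partial\Omega)$ is of order one. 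A further subtlety is establishing the required regularity of $H$ near $z$ despite the non-smooth skin term $\xi\mathcal{D}_\Omega[\partial u/\partial\nu|_+]$; this follows from the separation between $D$ and $\partial\Omega$, which makes the layer potentials real-analytic in a neighborhood of $z$. Collecting powers of $\delta$, all terms with $|\alpha|+|\beta|\ge K+2$ contribute at most $O(\delta^{d+K})$, yielding the stated uniform error.
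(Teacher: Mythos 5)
Your proposal follows essentially the same route as the paper: both reduce the problem to the representation formula $u-H=\mathcal{S}_{D}(\lambda I-\mathcal{K}_{D}^{*})^{-1}\left[\left.\frac{\partial H}{\partial\nu}\right|_{\partial D}\right]$, then rescale to $B$, Taylor-expand $H$ and the Green function, identify the surface integrals as $M_{\alpha\beta}(\lambda,B)$, and control the remainder via the $L^{2}(\partial D)\to L^{\infty}(\partial\Omega)$ bound on $\mathcal{S}_{D}(\lambda I-\mathcal{K}_{D}^{*})^{-1}$ together with the vanishing of the $|\alpha|=0$ and $|\beta|=0$ terms. The only difference is in how that representation formula is obtained --- you derive it by a direct harmonic-extension and jump-relation argument across $\partial\Omega$, whereas the paper reads it off from the coupled system of boundary integral equations on $\partial\Omega\cup\partial D$ established in its earlier electrolocation work --- and both derivations are valid.
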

Let us make a few remarks. First, the definition of the GPTs still
holds for complex-valued $\lambda$. However, some properties are
lost by this change; thus one has to study them more carefully in
this situation. Second, the function $H$, which is computed from
the boundary measurements,  still depends on $\delta$ but this is
not important for our present study. Indeed, formula
(\ref{eq:asymptotic formula}) could have been derived with $U$,
the background solution in the absence of the target, instead of
$H$ and $G_{R}$ - the Green function associated to Robin
conditions on $\partial\Omega$ - instead of $G$, but it is much
easier to compute $\partial^{\alpha}H(z)$ and
$\partial^{\beta}G(x-z)$ once $z$ is known. This leads us to the
third remark: the location $z$ is supposed to be known from the
algorithm developed in~\cite{electroloc}. Electrolocation
algorithms are either based on a space-frequency approach in the
case of multifrequency measurements or on the fish's movement if
only one frequency is used \cite{electroloc, boyer3}. Finally, it
is worth mentioning that the knowledge of
$M_{\alpha\beta}(\lambda,B)$ for all $\alpha, \beta\in
\mathbb{N}^d$ determines uniquely $B$ and $\lambda$ \cite{mms,
LNM1846}. Moreover, the following scaling relation holds:
$$
M_{\alpha\beta}(\lambda, \delta B) = \delta^{d-2 + |\alpha| +
|\beta|} M_{\alpha\beta}(\lambda,B).
$$

We will follow the proof of \cite[Theorem 4.8]{LNM1846}. In a first
step, let us show the following formula.
\begin{lem}
For $x\in\mathbb{R}^{d}$,
\begin{equation}
u(x)=H(x)+\mathcal{S}_{D}(\lambda
I-\mathcal{K}_{D}^{*})^{-1}\left[\left.\frac{\partial
H}{\partial\nu}\right|_{\partial D}\right](x),\label{eq:u-H}
\end{equation}
where $\nu$ is the outward normal unit vector at $\partial D$ and
$\mathcal{S}_{D}$ and $\mathcal{K}_{D}^{*}$ are respectively
defined by (\ref{defs}) and (\ref{defk}) with $\Omega$ replaced
with $D$.\end{lem}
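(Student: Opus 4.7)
The proof strategy is to exhibit $u-H$ directly as a single layer potential on $\partial D$ with a concrete density, and then identify that density with $(\lambda I - \mathcal{K}_D^*)^{-1}[\partial H/\partial\nu|_{\partial D}]$. The main tool is a uniqueness argument combined with Liouville's theorem.

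First, I would define $\tilde\phi := (k-1)\,\partial u/\partial\nu|_{-}$ on $\partial D$ and set $V(x) := u(x) - \mathcal{S}_D[\tilde\phi](x)$ for $x \in \mathbb{R}^d$. Using the transmission conditions $u|_+ = u|_-$ and $\partial u/\partial\nu|_+ = k\,\partial u/\partial\nu|_-$ on $\partial D$ together with the classical jumps $\partial(\mathcal{S}_D\phi)/\partial\nu|_\pm = (\pm I/2+\mathcal{K}_D^*)\phi$, a direct calculation gives $V|_+ = V|_-$ and $\partial V/\partial\nu|_+ = \partial V/\partial\nu|_-$ on $\partial D$. Thus $V$ is harmonic across $\partial D$, and its single-valued normal derivative there equals $\partial u/\partial\nu|_+ - (I/2+\mathcal{K}_D^*)\tilde\phi$. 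Substituting $\partial u/\partial\nu|_+ = k\tilde\phi/(k-1)$ and collecting terms yields $\partial V/\partial\nu = (\lambda I - \mathcal{K}_D^*)\tilde\phi$ with $\lambda = (k+1)/(2(k-1))$, so $\tilde\phi = (\lambda I - \mathcal{K}_D^*)^{-1}[\partial V/\partial\nu|_{\partial D}]$.

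Next I would identify $V$ with $H$ by letting $W := V - H$ and showing that $W$ is harmonic on all of $\mathbb{R}^d$. Because $\overline D$ and $\overline\Omega$ are disjoint, $\mathcal{S}_D[\tilde\phi]$ is harmonic in a neighborhood of $\overline\Omega$, so $\Delta V = \Delta u = f = \Delta p = \Delta H$ in $\Omega$; both $V$ and $H$ are harmonic in $(\mathbb{R}^d\setminus\overline\Omega)\setminus\overline D$ and in $D$; and harmonicity of $W$ across $\partial D$ follows from the previous step together with the smoothness of $H$ near $\partial D$. The essential check is across $\partial\Omega$: writing $g := \partial u/\partial\nu|_+$ and applying the jumps of $\mathcal{S}_\Omega$ and $\mathcal{D}_\Omega$, one obtains $H|_+ - H|_- = \xi g$ and $\partial H/\partial\nu|_+ - \partial H/\partial\nu|_- = g$, while the skin condition $u|_+ - u|_- = \xi g$ together with $\partial u/\partial\nu|_- = 0$ gives exactly the same jumps for $u$, and hence for $V$ because $\mathcal{S}_D[\tilde\phi]$ is smooth across $\partial\Omega$. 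Therefore both $W$ and $\partial W/\partial\nu$ are continuous across $\partial\Omega$, so $W$ is harmonic throughout $\mathbb{R}^d$.

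Finally, I would establish decay at infinity. The source neutrality and a divergence-theorem computation over $D$ give $\int_{\partial D}\tilde\phi = (k-1)\int_D \Delta u = 0$; applying the divergence theorem on the exterior and using the assumed decay of $u$ yields $\int_{\partial\Omega} g = 0$. These vanishing moments imply that $\mathcal{S}_D[\tilde\phi]$ and $\mathcal{S}_\Omega[g]$ both decay as $O(|x|^{1-d})$ in dimensions $d \in \{2,3\}$, which is precisely what is needed to kill the otherwise dangerous logarithmic tail when $d=2$. Combined with the decay of $p$ and of $\mathcal{D}_\Omega[g]$, this shows $W$ decays at infinity, so Liouville's theorem forces $W\equiv 0$, whence $V = H$. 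Substituting back into the definition of $V$ gives the claimed identity. The main technical obstacle is the coupled book-keeping of the jump relations on $\partial\Omega$ with the skin parameter $\xi$, and verifying the vanishing-moment conditions that guarantee decay of the single layer potentials in two dimensions.
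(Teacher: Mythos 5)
Your argument is correct, and it is a genuinely different route from the one in the paper. The paper does not construct the representation from scratch: it imports from the companion work \cite{electroloc} the ansatz $u=p+\mathcal{S}_{\Omega}[\psi]+\mathcal{D}_{\Omega}[\phi]+\mathcal{S}_{D}[\varphi]$ together with the $3\times 3$ system of boundary integral equations satisfied by the densities, reads off $\varphi=(\lambda I-\mathcal{K}_{D}^{*})^{-1}\bigl[\partial H/\partial\nu|_{\partial D}\bigr]$ from the third equation of that system, and then uses the jump formulas to identify $\psi$ and $\phi$ with $\partial u/\partial\nu|_{+}$ and $-u|_{+}+u|_{-}$, so that $p+\mathcal{S}_{\Omega}[\psi]+\mathcal{D}_{\Omega}[\phi]=H$ by the very definition \eqref{eq:H-def}. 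You instead take the density to be the explicit trace quantity $\tilde\phi=(k-1)\,\partial u/\partial\nu|_{-}$, verify via the transmission conditions on $\partial D$ that $V=u-\mathcal{S}_{D}[\tilde\phi]$ is harmonic across $\partial D$ with $\partial V/\partial\nu=(\lambda I-\mathcal{K}_{D}^{*})[\tilde\phi]$, and then prove $V=H$ by showing $W=V-H$ is entire harmonic (matching the $\xi$-dependent jumps across $\partial\Omega$) and decays at infinity, invoking Liouville. Your computations check out: the jumps of $H$ across $\partial\Omega$ are indeed $\xi g$ in the trace and $g$ in the normal derivative, matching those of $u$ from the skin condition; the algebra $\tfrac{k}{k-1}-\tfrac12=\tfrac{k+1}{2(k-1)}=\lambda$ is right; and the vanishing moments $\int_{\partial D}\tilde\phi=0$ and $\int_{\partial\Omega}\partial u/\partial\nu|_{+}=0$ are exactly what is needed to tame the logarithm in $d=2$. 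What each approach buys: the paper's proof is a two-line identification but is only as self-contained as the cited solvability result for the coupled integral system; yours is longer but elementary and independent of that machinery, at the cost of having to justify separately the invertibility of $\lambda I-\mathcal{K}_{D}^{*}$ for the complex contrast $\lambda=(k+1)/(2(k-1))$ (which both proofs implicitly need for the stated formula to make sense) and the regularity needed to make sense of the traces $\partial u/\partial\nu|_{\pm}$ on $\partial D$.
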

\begin{proof}
In \cite[Section 4.1.2]{electroloc}, it is shown that
\[
u(x)=p(x)+\mathcal{S}_{\Omega}[\psi](x)+\mathcal{D}_{\Omega}[\phi](x)+\mathcal{S}_{D}[\phi](x),
\]
where the functions $\psi$, $\phi\in L^{2}(\partial\Omega)$ and
$\phi\in L^{2}(\partial D)$ verify the following system
\[
\left\{ \begin{alignedat}{2}\phi & =-\xi\psi & \,\, \mbox{on } \partial\Omega,\\
\left(\frac{I}{2}-\mathcal{K}_{\Omega}^{*}+\xi\frac{\partial\mathcal{D}_{\Omega}}{\partial\nu}\right)[\psi]
-\frac{\partial}{\partial\nu}(\mathcal{S}_{D}[\phi]) & =\frac{\partial p}{\partial\nu} & \,\, \mbox{on } \partial\Omega,\\
-\frac{\partial}{\partial\nu}(\mathcal{S}_{\Omega}[\psi])+\xi\frac{\partial}{\partial\nu}(\mathcal{D}_{\Omega}[\psi])
+(\lambda I-\mathcal{K}_{D}^{*}) & [\phi]=\frac{\partial
p}{\partial\nu} & \,\mbox{on } \partial D.
\end{alignedat}
\right.
\]
The third line gives us
\[
\phi=(\lambda
I-\mathcal{K}_{D}^{*})^{-1}\left[\left.\frac{\partial
H}{\partial\nu}\right|_{\partial D}\right],
\]
and the jump formulas for the single and double layer
potentials~\cite[Theorem 2.17]{ammarikang2007GPT} give us
\[
\psi= \frac{\partial u}{\partial\nu} \bigg|_+ \quad \mbox{ and
}\phi=- u|_+ + u|_-,
\]
so that, from the boundary conditions of the
system~(\ref{eq:system-u}), we obtain
$p+\mathcal{S}_{\Omega}[\psi]+\mathcal{D}_{\Omega}[\phi]=H$ and
the lemma is proved.
\end{proof}

We can now prove Theorem~\ref{thm:asymptotic-formula}, using the
arguments in \cite[pp. 72-73]{LNM1846}. Starting with formula
(\ref{eq:u-H}), the proof relies on a Taylor expansion of $H$ and
the Green function involved in the single layer potential. Indeed,
denoting
\[
H_{K}(x)=\sum_{\left|\alpha\right|=0}^{K}\frac{1}{\alpha!}\partial^{\alpha}H(z)(x-z)^{\alpha},
\]
a Taylor expansion gives us
\[
\left\Vert \frac{\partial H}{\partial\nu}-
\frac{\partial H_{K}}{\partial\nu}\right\Vert _{L^{2}(\partial D)}\leq C\delta^{K}\left|\partial D\right|^{1/2},
\]
and from \cite[Formula (4.10)]{LNM1846}, we have for any $h\in
L^2(\partial D)$ such that $\int_{\partial D}h=0$:
$$
\forall x\in\partial\Omega,\,\,\left|\mathcal{S}_{D}(\lambda
I-\mathcal{K}_{D}^{*})^{-1}[h](x)\right|\leq C\delta
\left|\partial D\right|^{1/2} \left\Vert h\right\Vert
_{L^{2}(\partial D)}.$$ Hence, using the fact that $\left|\partial
D\right| = \delta^{d-1} \left|\partial B\right|$, we obtain
\[\begin{array}{l}
\ds \left\Vert \mathcal{S}_{D}(\lambda
I-\mathcal{K}_{D}^{*})^{-1}\left[\frac{\partial
H}{\partial\nu}-\frac{\partial
H_{K}}{\partial\nu}\right]\right\Vert
_{L^{\infty}(\partial\Omega)} \\
\nm \qquad \leq \ds C\delta\left|\partial D\right|^{1/2}\left\Vert
\frac{\partial H}{\partial\nu}-\frac{\partial
H_{K}}{\partial\nu}\right\Vert _{L^{2}(\partial D)} \leq
C\delta^{d+K}.
\end{array}
\]
Plugging this inequality into (\ref{eq:u-H}) enables us to write,
for $x\in\partial\Omega$,
\[
u(x)=H(x)+\mathcal{S}_{D}(\lambda
I-\mathcal{K}_{D}^{*})^{-1}\left[\frac{\partial
H_{K}}{\partial\nu}\right](x)+O(\delta^{d+K}).
\]
By a change of variables $y'=(y-z)/\delta$,  denoting
$\phi_{\alpha}(y')= (\lambda
I-\mathcal{K}_{B}^{*})^{-1}[\nu\cdot\nabla w^{\alpha}](y')$ for
$y'\in\partial B$ (where $\nu$ is here the outward normal unit
vector to $\partial B$), we have (see for example the arguments
in~\cite[Section 3]{AGKLY11})
\[\begin{array}{l}
\ds
u(x)-H(x)=\sum_{\left|\alpha\right|=0}^{K}\frac{1}{\alpha!}\partial^{\alpha}H(z)
\delta^{\left|\alpha\right|+d-2} \\ \nm \qquad \qquad \ds \times
\int_{\partial B}G(x-z-\delta
y')\phi_{\alpha}(y')d\sigma(y)+O(\delta^{d+K}).
\end{array} \] We can now conclude by injecting a Taylor expansion of
the Green function
\[
G(x-z-\delta y)=\sum_{\left|\beta\right|=0}^{\infty}\frac{(-\delta)^{\left|\beta\right|}}{\beta!}\partial^{\beta}G(x-z)y^{\beta},
\]
in the integrand, giving
\[\begin{array}{l}
\ds
u(x)-H(x)=\delta^{d-2}\sum_{\left|\alpha\right|=0}^{K}\sum_{\left|\beta\right|=0}^{K-\left|
\alpha\right|+1}\frac{(-1)^{\vert\beta\vert}\delta^{\left|\alpha\right|+\left|\beta\right|}}{\alpha!\beta!}
\partial^{\alpha}H(z)\\ \nm \qquad \qquad \ds \times \partial^{\beta}G(x-z)\int_{\partial
B}y^{\beta}\phi_{\alpha}(y)d\sigma(y)+O(\delta^{d+K}).
\end{array} \] The last term is $M_{\alpha\beta}(\lambda,B)$ by
definition~\cite{ammarikang2007GPT}; it then suffices to show that
the terms with $\left|\alpha\right|=0$ or $\left|\beta\right|=0$
vanish, which is the case because $\int_{\partial
B}\phi_{\alpha}=0$ and $\phi_{\alpha}=0$ if
$\left|\alpha\right|=0$. Thus,
Theorem~\ref{thm:asymptotic-formula} is proved. \qed

\subsection{Data acquisition and reduction}

Let us suppose that the fish is moving, and let us take a sample
of $S\in\mathbb{N}^{*}$ different positions
$\left(\Omega_{s}\right)_{1\leq s\leq S}$. This gives us $2S$
different functions $\left(u_{s}\right)_{1\leq s\leq S}$ and
$\left(H_{s}\right)_{1\leq s\leq S}$, leading us to the following
data matrix
\begin{equation}
\mathbb{Q}:=\left(Q_{sr}\right)_{1\leq s\leq S,1\leq r\leq R}:=
\left(u_{s}(x_{r}^{(s)})-H_{s}(x_{r}^{(s)})\right)_{1\leq s\leq S,1\leq r\leq R},\label{eq:def-Q}
\end{equation}
where $(x_{r}^{(s)}\in\partial\Omega_{s})_{1\leq r\leq R}$ are the
receptors of the fish being in the $s^{\mbox{th}}$ position. The
choices of indices are motivated by the fact that the different
positions play the role of sources.

The goal of this subsection is to simplify this data set in order
to extract the GPTs (precisely, the contracted GPTs as it will be
seen below). Indeed, from~(\ref{eq:asymptotic formula}), one has
\begin{equation}\begin{array}{l}
\ds
Q_{sr}=\delta^{d-2}\sum_{\left|\alpha\right|=1}^{K}\sum_{\left|\beta\right|=1}^{K-\left|\alpha\right|+1}(-1)^{\vert\beta
\vert}\frac{\delta^{\left|\alpha\right|+\left|\beta\right|}}{\alpha!\beta!}\partial^{\alpha}H_{s}(z)
\\ \nm \qquad \qquad \ds \times  M_{\alpha\beta}(\lambda,B)\partial^{\beta}G(x^{(s)}_{r}-z)+O(\delta^{d+K}).\label{eq:Qsr}
\end{array} \end{equation} As in~\cite{dico}, we will express this formula
in terms of contracted GPTs (CGPTs), when the dimension of the
space is $d=2$. Let us first recall the definitions of these
contracted GPTs. For a target $B$ with contrast ratio $\lambda$,
knowing the GPTs $M_{\alpha\beta}(\lambda,B)$ for all indices
$\alpha$ and $\beta$ such that $\left|\alpha\right|=m$ and
$\left|\beta\right|=n$ leads us to construct the following
combinations, called CGPTs,
\[
\begin{array}{l} \ds M_{mn}^{cc}
=\sum_{|\alpha|=m} \sum_{|\beta|=n }
a_{\alpha}^{m}a_{\beta}^{n}M_{\alpha\beta},\\
\nm
\ds M_{mn}^{cs}  =\sum_{|\alpha|=m} \sum_{|\beta|=n} a_{\alpha}^{m}b_{\beta}^{n}M_{\alpha\beta},\\
\nm \ds M_{mn}^{sc}
=\sum_{\left|\alpha\right|=m}\sum_{\left|\beta\right|=n}b_{\alpha}^{m}a_{\beta}^{n}M_{\alpha\beta},\\
\nm \ds M_{mn}^{ss}
=\sum_{\left|\alpha\right|=m}\sum_{\left|\beta\right|=n}b_{\alpha}^{m}b_{\beta}^{n}M_{\alpha\beta},
\end{array}
\]
where the real numbers $a_{\alpha}^{m}$ and $b_{\beta}^{m}$ are
defined by the following relation
\[
(x_{1}+ix_{2})^{m}=\sum_{\left|\alpha\right|=m}a_{\alpha}^{m}x^{\alpha}+i\sum_{\left|\beta\right|=m}b_{\beta}^{m}x^{\beta}.
\]
In the polar coordinates, $x=r_{x}e^{i\theta_{x}}$, these
coefficients also verify the following characterization
\[
\sum_{\left|\alpha\right|=m}a_{\alpha}^{m}x^{\alpha}=r_{x}^{m}\cos
m\theta_{x} \mbox{ and
}\sum_{\left|\beta\right|=m}b_{\beta}^{m}x^{\beta}=r_{x}^{m}\sin
m\theta_{x}.
\]
This enables us to show~\cite[Appendix A.2]{dico} that
\begin{equation}
\frac{(-1)^{|\alpha|}}{\alpha!}\partial^{\alpha}G(x)=\frac{-1}{2\pi\left|\alpha\right|}\left(a_{\alpha}^{\left|\alpha\right|}\frac{\cos\left|\alpha\right|\theta_{x}}{r_{x}^{\left|\alpha\right|}}+b_{\alpha}^{\left|\alpha\right|}\frac{\sin\left|\alpha\right|\theta_{x}}{r_{x}^{\left|\alpha\right|}}\right).\label{eq:coeff-G}
\end{equation}
From the definition of $H$, and with the help of the previous
formula, one can prove the following lemma.
\begin{lem}
Let the source $f$ be a dipole of moment $\mathbf{p}_{s}$ placed
at $z_{s}$:
\begin{equation} \label{formps}
p_{s}(x)=\mathbf{p}_{s}\cdot\nabla G(x-z_{s}),
\end{equation}
Then, for any $\alpha\in\mathbb{N}^{2}$, there exist two real
numbers $A_{\left|\alpha\right|,s,z}$ and
$B_{\left|\alpha\right|,s,z}$ such that
\[
\frac{1}{\alpha!}\partial^{\alpha}H_{s}(z)=A_{\left|\alpha\right|,s,z}a_{\alpha}^{\left|\alpha\right|}
+B_{\left|\alpha\right|,s,z}b_{\alpha}^{\left|\alpha\right|}.
\]
Moreover, $A_{\left|\alpha\right|,s,z}$ and
$B_{\left|\alpha\right|,s,z}$ can be expressed in the following
way
\[
\begin{array}{lll}A_{m,s,z} & =&\ds \frac{(-1)^{m}}{2\pi}\mathbf{p}_{s}\cdot\left(\begin{array}{c}
\phi_{m+1}(z-z_{s})\\
\psi_{m+1}(z-z_{s})
\end{array}\right)\\ \nm &&\ds -\frac{1}{2\pi m}\int_{\partial\Omega}\left.\frac{\partial u_{s}}{\partial\nu}
\right|_{+}(y)\phi_{m}(y-z)d\sigma(y),\\
\nm &&\ds -\frac{\xi}{2\pi }\int_{\partial\Omega}
\left(\begin{array}{c}
\phi_{m+1}(y-z)\\
\psi_{m+1}(y-z)
\end{array}\right)\cdot \nu_y \left.\frac{\partial u_{s}}{\partial\nu}
\right|_{+}(y)\, d\sigma(y),
\\
\nm B_{m,s,z} &=&\ds
\frac{(-1)^{m}}{2\pi}\mathbf{p}_{s}\cdot\left(\begin{array}{c}
\psi_{m+1}(z-z_{s})\\
-\phi_{m+1}(z-z_{s})
\end{array}\right)\\ \nm && \ds -\frac{1}{2\pi m}
\int_{\partial\Omega}\left.\frac{\partial
u_{s}}{\partial\nu}\right|_{+}(y)\psi_{m}(y-z)d\sigma(y)\\
\nm &&\ds -\frac{\xi}{2\pi }\int_{\partial\Omega}
\left(\begin{array}{c}
\psi_{m+1}(y-z)\\
-\phi_{m+1}(y-z)
\end{array}\right)\cdot \nu_y \left.\frac{\partial u_{s}}{\partial\nu}
\right|_{+}(y)\, d\sigma(y),
\end{array}
\]
where the functions $\phi_{m}$ and $\psi_{m}$ are defined for
$x\in\mathbb{R}^{2}$, $x=(r_{x},\theta_{x})$ in the polar
coordinates, by
\[
\begin{alignedat}{1}\phi_{m}(x) & =\frac{\cos
m\theta_{x}}{r_{x}^{m}},\quad \psi_{m}(x) & =\frac{\sin
m\theta_{x}}{r_{x}^{m}}.
\end{alignedat}
\]
\end{lem}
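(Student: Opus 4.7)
The plan is to exploit the fact that $H_s$ is harmonic in a neighborhood of the target location $z$. Indeed, since $z$ lies outside $\overline{\Omega}$ (the target is away from the fish) and $z_s\in\Omega$, all three pieces in the definition~\eqref{eq:H-def} of $H_s$---the dipole potential $p_s$, the single layer $\mathcal{S}_\Omega[\partial u_s/\partial\nu|_+]$, and the double layer $\mathcal{D}_\Omega[\partial u_s/\partial\nu|_+]$---are harmonic near $z$. Any harmonic function near $z$ admits a convergent expansion in the basis of two-dimensional solid harmonics
\[
u_m(w)=r^m\cos m\theta=\sum_{|\alpha|=m}a_\alpha^m w^\alpha,\qquad v_m(w)=r^m\sin m\theta=\sum_{|\alpha|=m}b_\alpha^m w^\alpha,
\]
where $w=re^{i\theta}$ are polar coordinates centered at $z$. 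Writing
\[
H_s(z+w)=\sum_{m\geq 0}\bigl(A_{m,s,z}\,u_m(w)+B_{m,s,z}\,v_m(w)\bigr)
\]
and applying $\frac{1}{\alpha!}\partial^\alpha$ at $w=0$ for $|\alpha|=m$ immediately yields the decomposition $\frac{1}{\alpha!}\partial^\alpha H_s(z)=A_{m,s,z}a_\alpha^m+B_{m,s,z}b_\alpha^m$. It then remains only to identify the coefficients $A_{m,s,z}$ and $B_{m,s,z}$ by reading off the $u_m,v_m$ components of each of the three pieces of $H_s$.

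The workhorse is the Taylor expansion of the 2D Green function about a regular point, obtained from the complex logarithm expansion $\log(w_0+w)=\log w_0+\sum_{k\geq 1}\frac{(-1)^{k-1}}{k}(w/w_0)^k$ after taking real parts, with $w_0=z-y$ and $w=x-z$:
\[
G(x-y)=G(z-y)+\frac{1}{2\pi}\sum_{k\geq 1}\frac{(-1)^{k-1}}{k}\bigl[\phi_k(z-y)u_k(x-z)+\psi_k(z-y)v_k(x-z)\bigr].
\]
Substituting this into $\mathcal{S}_\Omega[\partial u_s/\partial\nu|_+](x)$, reading off the coefficients of $u_m(x-z)$ and $v_m(x-z)$, and applying the parities $\phi_m(-w)=(-1)^m\phi_m(w)$ and $\psi_m(-w)=(-1)^m\psi_m(w)$ to convert $\phi_m(z-y)\leftrightarrow\phi_m(y-z)$, directly produces the second line of the claimed formulas for $A_{m,s,z}$ and $B_{m,s,z}$.

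For the dipole $p_s(x)=\mathbf{p}_s\cdot\nabla G(x-z_s)$ and the double layer, I differentiate the above Taylor expansion in $x$ and use the Cauchy--Riemann-type identities $\nabla u_k=k(u_{k-1},-v_{k-1})$ and $\nabla v_k=k(v_{k-1},u_{k-1})$, which follow at once from $u_k+iv_k=(w_1+iw_2)^k$. Dotting with $\mathbf{p}_s$ rewrites $\mathbf{p}_s\cdot\nabla G(x-z_s)$ as a sum whose $u_m(x-z)$ coefficient is $\frac{(-1)^m}{2\pi}\mathbf{p}_s\cdot(\phi_{m+1}(z-z_s),\psi_{m+1}(z-z_s))^{\!T}$ and whose $v_m(x-z)$ coefficient is the orthogonal partner $\frac{(-1)^m}{2\pi}\mathbf{p}_s\cdot(\psi_{m+1}(z-z_s),-\phi_{m+1}(z-z_s))^{\!T}$; this yields the first line of $A_{m,s,z}$ and $B_{m,s,z}$. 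The double layer term $-\xi\mathcal{D}_\Omega[\partial u_s/\partial\nu|_+]$ is handled identically after using $\nabla_y G(x-y)=-\nabla_x G(x-y)$ to reduce the integrand $\nu_y\cdot\nabla_y G(x-y)$ to the same dipole computation with $\mathbf{p}_s$ replaced by $\nu_y$ and $z_s$ by $y$; integrating against $\partial u_s/\partial\nu|_+$ and applying the parity $\phi_{m+1}(z-y)=(-1)^{m+1}\phi_{m+1}(y-z)$ converts this into the third line of the statement.

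The only real obstacle is the careful bookkeeping of signs: the parities of $\phi_m$ and $\psi_m$ under $w\mapsto-w$, the alternating factor $(-1)^{k-1}$ from the logarithmic expansion, the minus sign from $\nabla_y=-\nabla_x$ in the double-layer integrand, and the prefactor $-\xi$ in front of $\mathcal{D}_\Omega$ in \eqref{eq:H-def}. Once these are tracked consistently, the three contributions from the three pieces of $H_s$ add termwise to reproduce exactly the three-line expressions for $A_{m,s,z}$ and $B_{m,s,z}$.
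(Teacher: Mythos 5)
Your proof is correct and takes essentially the same route as the paper's: both arguments reduce to the solid-harmonic expansion of the Green function about $z$ --- the paper uses the differentiated identity (\ref{eq:coeff-G}) while you use the complex-logarithm series, which are the same identity in two equivalent forms --- and then read off the dipole, single-layer, and double-layer contributions term by term with the same parity and gradient identities. Your opening observation that harmonicity of $H_s$ near $z$ already forces the form $A_{m,s,z}a_\alpha^m+B_{m,s,z}b_\alpha^m$ is a tidy way to get existence before computing, but the substance of the computation is identical.
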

\begin{proof}
Let us fix $\alpha\in\mathbb{N}^{2}$ and define $m=\vert\alpha\vert$.
Let us recall the definition of $H$, given in (\ref{eq:H-def})
\[
H_{s}(x)=p_{s}(x)+\mathcal{S}_{\Omega_{s}}\left[\left.\frac{\partial
u_{s}}{\partial\nu}\right|_{+}\right]-\xi\mathcal{D}_{\Omega_{s}}\left[\left.\frac{\partial
u_{s}}{\partial\nu}\right|_{+}\right],
\]
where $\Delta p_{s}=f_{s}$ in $\mathbb{R}^{2}$. From
(\ref{formps}) it follows that
\[
\partial^{\alpha}p_{s}(x)=\mathbf{p}_{s}\cdot\nabla\partial^{\alpha}G(x-z_{s}).
\]
Hence, (\ref{eq:coeff-G}) yields
\[\begin{array}{l}
\ds \frac{(-1)^{|\alpha|}}{\alpha!}\partial^{\alpha}p_{s}(z) = \ds
a_{\alpha}^{m}\left[\frac{-1}{2\pi
m}\mathbf{p}_{s}\cdot\nabla\phi_{m}(z-z_{s})\right]
\\ \qquad \qquad \ds +b_{\alpha}^{m}\left[\frac{-1}{2\pi
m}\mathbf{p}_{s}\cdot\nabla\psi_{m}(z-z_{s})\right]. \end{array}
\] Moreover, we have
\[
\begin{alignedat}{1}\nabla\phi_{m} & =-m\left(\begin{array}{c}
\phi_{m+1}\\
\psi_{m+1}
\end{array}\right),\quad
\nabla\psi_{m} & =-m\left(\begin{array}{c}
\psi_{m+1}\\
- \phi_{m+1}
\end{array}\right).
\end{alignedat}
\]
In the same manner, from
\[
\mathcal{S}_{\Omega_{s}}\left[\left.\frac{\partial
u_{s}}{\partial\nu}\right|_{+}\right](x)=\int_{\partial\Omega_{s}}\left.\frac{\partial
u_{s}(y)}{\partial\nu}\right|_{+} G(y-x)d\sigma(y),
\]
we can deduce
\[
\begin{array}{l}
\ds \frac{1}{\alpha!}
\partial^{\alpha}\mathcal{S}_{\Omega_{s}}\left[\left.\frac{\partial
u_{s}}{\partial\nu}\right|_{+}\right](z)  \\ \nm \ds =
\frac{1}{\alpha!} \int_{\partial\Omega_{s}}\left.\frac{\partial
u_{s}}{\partial\nu}\right|_{+}(y)(-1)^{|\alpha|}\partial^{\alpha}G(y-z)d\sigma(y)\\
 \nm \ds =a_{\alpha}^{m}\left(\int_{\partial\Omega_{s}}\frac{-1}{2\pi m}\left.
 \frac{\partial u_{s}}{\partial\nu}\right|_{+}(y)\frac{\cos
 m\theta_{y-z}}{r_{y-z}^{m}}d\sigma(y)\right)\\
\nm \ds
+b_{\alpha}^{m}\left(\int_{\partial\Omega_{s}}\frac{-1}{2\pi
m}\left.\frac{\partial u_{s}}{\partial\nu}\right|_{+}(y)
 \frac{\sin m\theta_{y-z}}{r_{y-z}^{m}}d\sigma(y)\right).
\end{array}
\]
Combining those two equations leads us to the desired result.
\end{proof}
From (\ref{eq:Qsr}), the data matrix is then expressed as follows
\begin{equation}
\begin{alignedat}{1}Q_{rs} & =\sum_{\left|\alpha\right|+\left|\beta\right|=1}^{K+1}
\left(A_{\left|\alpha\right|,s,z}a_{\alpha}^{\left|\alpha\right|}
+B_{\left|\alpha\right|,s,z}b_{\alpha}^{\left|\alpha\right|}\right)\\ &\times M_{\alpha\beta}(\lambda, \delta B)
\frac{-a_{\beta}^{\left|\beta\right|}\cos\left|\beta\right|\theta_{x_{r}^{(s)}-z}
-b_{\beta}^{\left|\beta\right|}\sin\left|\beta\right|\theta_{x_{r}^{(s)}-z}}{2\pi
\left|\beta\right|r_{x_{r}^{(s)}-z}^{\left|\beta\right|}}\\ & +O(\delta^{K+2})\\
 & =\sum_{m+n=1}^{K+1}\underbrace{\left(\begin{array}{cc}
A_{m,s,z} & B_{m,s,z}\end{array}\right)}_{\mathbf{S}_{sm}}\underbrace{\left(\begin{array}{cc}
M_{mn}^{cc} & M_{mn}^{cs}\\
M_{mn}^{sc} & M_{mn}^{ss}
\end{array}\right)}_{\mathbf{M}_{mn}}\\ & \times \underbrace{\left(\begin{array}{c}
\cos n\theta_{x_{r}^{(s)}-z}\\
\sin n\theta_{x_{r}^{(s)}-z}
\end{array}\right)\frac{-1}{2\pi nr_{x_{r}^{(s)}-z}^{n}}}_{\mathbf{G^{(s)}}_{nr}}
+\underbrace{O(\delta^{K+2})}_{{E}_{rs}}.
\end{alignedat}
\label{eq:Qrs-developped}
\end{equation}
Thus, defining the following matrices
\[
\mathbb{M}=\left(\begin{array}{cccc}
\mathbf{M}_{11} & \mathbf{M}_{12} & \ldots & \mathbf{M}_{1K}\\
\mathbf{M}_{21} &  & \iddots & 0\\
\vdots & \iddots & \iddots & \vdots\\
\mathbf{M}_{K1} & 0 & \ldots & 0
\end{array}\right),\,\,\mathbb{E}=\left({E}_{rs}\right)_{1\leq r\leq R,1\leq s\leq S},
\]
the problem is to recover the matrix $\mathbb{M}$ knowing the matrix
\[
\mathbb{Q}=\mathcal{L}(\mathbb{M})+\mathbb{E},
\]
where $\mathcal{L}$ is the linear operator defined by
(\ref{eq:Qrs-developped}).

Therefore,  the CGPTs of the target $D$ can be reconstructed as
the least-squares solution of the above linear system, {\it i.e.},
\begin{equation}
\widehat{\mathbb{M}} = \underset{\mathbb{M} \perp \mathrm{ker }
(\mathcal{L})}{\mbox{argmin}} \| \mathbb{Q} -
\mathcal{L}(\mathbb{M}) \|^2_{F}, \label{eq:lsqr}
\end{equation}
where $\mathrm{ker } (\mathcal{L})$ denotes the kernel of
$\mathcal{L}$ and $\|\cdot\|_F$ denotes the Frobenius norm of
matrices \cite{dico}.

Let us remark that, in the case of multifrequency measurements
$(\omega_{1},\ldots,\omega_{F})$, we can reconstruct
$\left(\widehat{\mathbb{M}}^{(f)}\right)_{1\leq f\leq F}$ from
$\left(\mathbb{Q}^{(f)}\right)_{1\leq f\leq F}$ analogously.

It is worth mentioning that the location of the target detected by
the fish may be different from the true one. Moreover, the target
may be rotated and hence, the reconstructed CGPTs correspond to a
translated, scaled, and rotated target $B$. In order to recognize
the shape $B$, it is therefore fundamental for the recognition
procedure to have size invariance, rotational invariance, and
translational invariance. This could be related to the behavioral
experiments that have shown that weakly electric fish categorize
targets according to their shapes but not according to  sizes,
locations, or orientations \cite{gerhard}.

\section{Recognition and classification}

\label{sub:identification-algo}

Depending on whether we consider multifrequency measurements or
not, we will not identify the CGPTs in the same way.

\subsection{Fixed frequency setting: shape descriptor based classification}

In~\cite{dico}, an algorithm based on shape descriptors was
developed for the recognition of a target in a more classical
electrical sensing setup, ({\it i.e.}, multiple sources/receptors
placed on the surface of a disk). In this paper, we apply this
algorithm  in the context of electro-sensing.

We recall here the concept and properties of shape descriptors in
two dimensions \cite{dico}. For a double index $mn$, with
$m,n=1,2,\ldots$, we introduce the following complex combinations,
$\No = (\mathcal{N}^{(1)}_{mn})_{m,n}, \Nt =
(\mathcal{N}^{(2)}_{mn})_{m,n}$,  of CGPTs:
\begin{equation}
\begin{aligned}
\mathcal{N}^{(1)}_{mn}(\lambda, D) &= (M^{cc}_{mn} - M^{ss}_{mn})
+
i(M^{cs}_{mn} + M^{sc}_{mn}),\\
\mathcal{N}^{(2)}_{mn}(\lambda, D) &= (M^{cc}_{mn} + M^{ss}_{mn})
+ i(M^{cs}_{mn} - M^{sc}_{mn}).
\end{aligned}
\label{eq:Mccdef}
\end{equation}

Let
$$u=\frac{\mathcal{N}^{(2)}_{12}(D)}{2\mathcal{N}^{(2)}_{11}(D)},
\quad T_{-u}D = \{x-u, x\in D\}.$$ We define  the following
quantities which are translation invariant:
\begin{align}
  \label{eq:shape_descrp_trans}
  {\mathbf \Tauo}(D) &= \No(T_{-u}D) = \mathbf{C}^{-u}\No(D)(\mathbf{C}^{-u})^t, \\
  \nm
  {\mathbf \Taut}(D) &= \Nt(T_{-u}D) = \overline{\mathbf{C}^{-u}}\Nt(D)(\mathbf{C}^{-u})^t,
\end{align}
with $t$ being the transpose and the matrix $\mathbf{C}^{-u}$
being a lower triangular matrix with the $m,n$-th entry given by
  $$
  \mathbf{C}^{-u}_{mn}= \binom{m}{n}  (-u)^{m-n}.
  $$
From ${\mathbf \Tauo}(D) =(\Tauo_{mm}(D))_{m,n}$ and ${\mathbf
\Taut}(D) =(\Taut_{mm}(D))_{m,n}$, we define, for any indices
$m,n$, the scaling invariant quantities:
\begin{align}
  \label{eq:shape_descrp_scl}
  \Scloj_{mn}(D) =
  \frac{\Tauoj_{mn}(D)}{\left(\Taut_{mm}(D)\Taut_{nn}(D)\right)^{1/2}},
  \quad j=1,2.
\end{align}
Finally, we introduce the CGPT-based shape descriptors ${\mathbf
\Dcrpo} = ( \Dcrpo_{mn})_{m,n}$ and ${\mathbf \Dcrpt} =
(\Dcrpt_{mn})_{m,n}$:
\begin{align}
  \label{eq:shape_descrp}
  \Dcrpo_{mn}(D) = |\Sclo_{mn}(D)|, \qquad \ \Dcrpt_{mn}(D) =
  |\Sclt_{mn}(D)|,
\end{align}
where $|\cdot|$ denotes the modulus of a complex number.
Constructed in this way, ${\mathbf \Dcrpo}$ and ${\mathbf \Dcrpt}$
are invariant under translation, rotation, and scaling. Only shape
descriptors of order $2$, {\it i.e.}, for $m,n=1,2$ will be used
in the sequel. Shape descriptors in three dimensions were derived
in \cite{invariants3D}.

\subsection{Multifrequency setting: Spectral induced polarization based classification}

When multiple frequencies are involved, we can use the shape
descriptors $\Dcrpo_{mn}(D)$ and $\Dcrpt_{mn}(D)$ at frequencies
$\omega_1, \ldots, \omega_F$ to enhance the stability of the
classification. However, as it will be shown later, this does not
yield a very stable classification procedure.

Here we rather focus on the first-order polarization tensor (PT),
that is,  the  $2\times2$ complex matrix $\mathcal{M}^{(f)}(D)$
associated with the target $D$ and frequency $f$:
$$
\mathcal{M}^{(f)}(D):= \int_{\partial D} \bigg(\frac{\sigma + 1 +
i \omega_f \varepsilon}{2(\sigma -1 + i \omega_f \varepsilon)}
I-\mathcal{K}_{D}^{*} \bigg)^{-1}[\nu] y \, d\sigma(y),
$$
for $f=1,\ldots, F$. We will show that they are sufficient to
identify  efficiently the targets. Note that it is not possible to
compute the shape descriptors $\Dcrpo_{mn}(D)$ and
$\Dcrpt_{mn}(D)$ based only on first-order PT, because they
require at least second-order polarization tensors. This limits
the use of shape descriptors in the limited-view case where the
reconstruction of higher-order GPTs is not accurate
\cite{ammari2012tracking}.

Here  we use the spectral content of the first-order PTs for
recognition. We have the following
properties~\cite{ammarikang2007GPT}.
\begin{prop}
For any scaling $\delta>0$, rotation angle $\theta\in\mathbb{R}$
and translation vector $z\in\mathbb{R}^{2}$, let us denote
\[
D= z+\delta R_{\theta} B:=\left\{ x=z+\delta R_{\theta}u,\,\, u\in
B\right\} ,
\]
where
\[
R_{\theta}:=\left(\begin{array}{cc}
\cos\theta & -\sin\theta\\
\sin\theta & \cos\theta
\end{array}\right),
\]
is the rotation matrix of angle $\theta$. Then,
\begin{equation} \label{eqrota}
\mathcal{M}^{(f)}(D)=\delta^{2}R_{\theta}\mathcal{M}^{(f)}(B)R_{\theta}^{T}.
\end{equation}
Hence, if we denote by $\tau_{1}^{(f)}(D)$ and $\tau_{2}^{(f)}(D)$
the singular values of $\mathcal{M}^{(f)}(D)$, we obtain
\[
\forall
j\in\{1,2\},\,\,\tau_{j}^{(f)}(D)=\delta^{2}\tau_{j}^{(f)}(B).
\]

\end{prop}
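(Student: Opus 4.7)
The plan is to decompose $D = z + \delta R_\theta B$ into its translation, rotation, and scaling pieces and establish the corresponding invariance of $\mathcal{M}^{(f)}$ for each. Since the admittivity contrast $\lambda = (\sigma + 1 + i\omega_f \varepsilon)/(2(\sigma - 1 + i\omega_f \varepsilon))$ depends only on the material and not on the geometry, it is unaffected by these maps, so I only have to track the geometric factors through the change of variables $T u := z + \delta R_\theta u$ that sends $\partial B$ onto $\partial D$.

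The key step is an intertwining rule for the Neumann--Poincar\'e operator. Pulling back functions by $\tilde\phi(u) := \phi(Tu)$, and using $Tu - Tu' = \delta R_\theta(u - u')$, $\nu_D(Tu) = R_\theta \nu_B(u)$, and $d\sigma_D(Tu') = \delta\,d\sigma_B(u')$, a direct computation shows that $\mathcal{K}_D^*[\phi] \circ T = \mathcal{K}_B^*[\tilde\phi]$: the rotation and the $\delta^{-1}$ factor picked up by $\nabla G$ are exactly canceled by the orthogonality of $R_\theta$ and the Jacobian $\delta$ coming from the surface measure. Consequently $(\lambda I - \mathcal{K}_D^*)^{-1}$ corresponds to $(\lambda I - \mathcal{K}_B^*)^{-1}$ under this pullback. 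Applied component-wise to $\nu_D \circ T = R_\theta \nu_B$, linearity in the source gives
$$(\lambda I - \mathcal{K}_D^*)^{-1}[\nu_D](Tu) \;=\; R_\theta\,(\lambda I - \mathcal{K}_B^*)^{-1}[\nu_B](u).$$

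Substituting this into the integral defining $\mathcal{M}^{(f)}(D)$, the surface measure brings out one factor of $\delta$ and the position $y = z + \delta R_\theta u$ splits additively. The $z$-piece integrates to zero because the density $(\lambda I - \mathcal{K}_B^*)^{-1}[\nu_B]$ has mean zero on $\partial B$; this is a standard consequence of the fact that $\mathcal{S}_B$ applied to this density must decay as $O(1/|x|)$ at infinity by construction of the exterior PT problem \cite[Chapter 4]{ammarikang2007GPT}. What remains is
$$\mathcal{M}^{(f)}(D) \;=\; \delta^2\, R_\theta \Bigl(\int_{\partial B} (\lambda I - \mathcal{K}_B^*)^{-1}[\nu_B](u)\, u^T\, d\sigma_B(u)\Bigr) R_\theta^T \;=\; \delta^2\, R_\theta\, \mathcal{M}^{(f)}(B)\, R_\theta^T,$$
which is (\ref{eqrota}). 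For the singular-value identity, since $R_\theta$ is real orthogonal one has $(R_\theta A R_\theta^T)(R_\theta A R_\theta^T)^* = R_\theta (A A^*) R_\theta^T$, which is unitarily similar to $A A^*$; hence $R_\theta A R_\theta^T$ and $A$ share the same singular values. The only non-routine point in the whole argument is the verification of the mean-zero property of $(\lambda I - \mathcal{K}_B^*)^{-1}[\nu_B]$, which is what kills the $z$-translation term; everything else is direct bookkeeping in the change of variables.
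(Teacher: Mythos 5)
Your proof is correct. The paper itself gives no proof of this proposition --- it is stated as a known property with a citation to \cite{ammarikang2007GPT} --- so there is no internal argument to compare against; your self-contained derivation is the standard one and all the steps check out: the intertwining $\mathcal{K}_D^*[\phi]\circ T=\mathcal{K}_B^*[\phi\circ T]$ does hold in $d=2$ (the factor $\delta^{-1}$ from the kernel cancels against the Jacobian $\delta$ of the surface measure, and the rotation drops out of the inner product by orthogonality), the componentwise application to $\nu_D\circ T=R_\theta\nu_B$ is legitimate because $R_\theta$ is a constant matrix, and the translation term is indeed killed by the mean-zero property of the density. For that last point, a slightly more elementary route than the decay of $\mathcal{S}_B$ at infinity is to use $\int_{\partial B}\mathcal{K}_B^*[\psi]\,d\sigma=\frac{1}{2}\int_{\partial B}\psi\,d\sigma$ together with $\int_{\partial B}\nu\,d\sigma=0$ and $\lambda\neq 1/2$, which gives $\int_{\partial B}(\lambda I-\mathcal{K}_B^*)^{-1}[\nu]\,d\sigma=0$ directly and also covers the complex-valued $\lambda$ arising here without any appeal to the real-contrast theory. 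The singular-value conclusion via unitary similarity of $AA^*$ is fine.
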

This gives an idea for two algorithms:
\begin{enumerate}
\item The first one, matching the singular values of all the
first-order PT $\left(\mathcal{M}^{(f)}\right)_{1\leq f\leq F}$,
would be dependent of the characteristic scale $\delta$ of the
targets in the dictionary; \item The second one, independent of
the scale of the target, would match the following quantities
\begin{equation}
\mu_{j}^{(f)}=\frac{\tau_{j}^{(f)}}{\tau_{j}^{(F)}},\label{eq:ratio_sv}
\end{equation}
for $j=1,2$ and $f=1,\ldots,F-1$.
\end{enumerate}
Some comments are in order. First, the reason why we consider the
first one, even if it is scale-dependent, is because it is far
more stable. Also, in some biological experiments, two targets of
different scales are considered as
different~\cite{von2007distance}. A question raised was then: how
is it possible to discriminate between a nearby small target and
an extended one situated far away? With the second algorithm, we
have an answer. The last remark concerns
equation~(\ref{eq:ratio_sv}). We could have also considered other
scale-dependent ratios, such as
\[
\frac{\tau_{j}^{(f)}}{\tau_{j}^{(1)}}\mbox{ or
}\frac{\tau_{j}^{(f)}}{\sum_{f'=1}^{F}\tau_{j}^{(f')}},
\]
but since $\tau_{j}^{(F)}$ happens to be the largest one (the
frequencies are sorted in increasing order), it is more stable to
consider~(\ref{eq:ratio_sv}). It is worth mentioning that if there
exists an integer $p>2$ such that $R_{2\pi/p} D = D$, then
$\mathcal{M}^{(f)}(D)$ is proportional to identity.

\subsection{Background field elimination}

We can also improve stability of reconstruction by eliminating the
background field. Let us denote by $U(x)$ the background electric
field ({\it i.e.}, the solution of (\ref{eq:system-u}) with
$k=1$). In~\cite[Proposition 2]{electroloc}, we have proved the
following formula:
\begin{equation}
\mathcal{P}_\Omega \left(  \left.\frac{\partial u_f}{\partial
\nu}\right|_+ - \left.\frac{\partial U}{\partial \nu}\right|_+
\right) \approx \nabla U(z)^T \mathcal{M}^{(f)}(D) \nabla_z \left(
\frac{\partial G}{\partial \nu_x}\right),
\label{eq:approx-dipol-U}
\end{equation}
where $u_f$ is $u$ associated with the $f^{\textrm{th}}$
frequency, $\mathcal{M}^{(f)}(D)$ is the first-order polarization
tensor at the  $f^{\textrm{th}}$ frequency, and
$\mathcal{P}_\Omega$ is the (real-valued) postprocessing operator
given by
$$
\mathcal{P}_\Omega:= \frac{1}{2} I - \mathcal{K}_\Omega^* - \xi
\frac{\partial \mathcal{D}_\Omega}{\partial \nu},
$$
with $\mathcal{D}_\Omega$ and $\mathcal{K}_\Omega^*$  being
defined by (\ref{defd}) and (\ref{defk}); see \cite{electroloc}.
Hence, if the emitted signal $U$ is real-valued, then taking the
imaginary part leads us to
\begin{equation}
\mathcal{P}_\Omega \left[ \Im m \, \left( \left.\frac{\partial
u_f}{\partial \nu}\right|_+  \right) \right] \approx \nabla U(z)^T
\Im m \,  \mathcal{M}^{(f)}(D)  \nabla_z \left( \frac{\partial
G}{\partial \nu_x}\right). \label{eq:approx-dipol-imag}
\end{equation}
Note that in biological sciences, the restriction on $U$ to be
real is justified since the permittivities of water and the fish
are negligible \cite{electroloc}. Now, from
(\ref{eq:approx-dipol-imag}), we can extract $\Im m\,
\mathcal{M}^{(f)}(D)$ by solving a least-squares problem similar
to (\ref{eq:lsqr}). Then, we have the singular values of the
imaginary part of $\mathcal{M}^{(f)}(D)$, which would be
sufficient for shape recognition and classification. The goal of
this procedure is to get rid off the computation of $
\partial U/\partial \nu$ in (\ref{eq:approx-dipol-U}), which is
supposed to be performed numerically in real-world applications,
thus subject to errors. Note that the postprocessing operator
${\mathcal P}_\Omega$ makes the data independent of the shape of
the fish's body.

Because of the following relation which follows from
(\ref{eqrota})
\begin{equation*}
\mathcal{M}^{(f)}(D) = O(\delta^2 \mathcal{M}^{(f)}(B)),
\end{equation*}
taking the imaginary part would lead us to only compute  $\nabla
U(z)$ in (\ref{eq:approx-dipol-imag}) and hence, the error made
here would be modulated by a factor of order $\delta^2$.

\section{Numerical illustrations}

In this section, we illustrate the performance of the algorithms
developed in the previous section. We use the CGPTs obtained in
order to classify the targets. We present an example with fixed
frequency, and another with multifrequency measurements. As it
will be seen, the latter does not lead us to a significantly more
stable classification in the presence of noise or for limited
aperture. The errors in the reconstruction of the high-order
polarization tensors due to measurement noise or the limited-view
aspect deteriorate the stability of the proposed algorithm.
However, when, at multiple frequencies, only the first-order
polarization tensor is used, we arrive at a very robust and
efficient classification procedure.

For the sake of clarity, and due to the large numbers of computations
performed, the results are presented in the appendix.

\subsection{Setup and methods}
We describe the dictionary as well as the measurement systems. We
consider two different shapes for the fish: ellipses and twisted
ellipses. Note that this variety of shapes exists in nature. On
the one hand, twisted ellipses would represent electric eels
(\emph{Electrophorus electricus}), whereas on the other hand
straight ellipses would look like \emph{Apteronotids}
\cite{moller1995}. This simplified representation shows that the
principle of our algorithms can be generalized to any kind of
fish's shape (hence modeling, for example, electro-sensing for
\emph{Mormyrids} as well). It also enhances the fact that, for
bio-inspired engineering applications, the shape of the robot is
not determining. Moreover, as we will see later, our simplified
representation is a good model to tackle aperture issues.

\subsubsection{Dictionary}

The dictionary $\mathcal{D}$ that we consider is composed by 8
different targets: a disk, an ellipse, the letter 'A', the letter
'E', a rectangle, a square, a triangle, and an ellipse with
different electrical parameters (see Fig.~\ref{fig:dico}). Indeed,
all the other targets have conductivity $\sigma=2$ and
permittivity $\varepsilon=1$ whereas the last one has conductivity
$\sigma=5$ and permittivity $\varepsilon=2$. Except when
mentioned, the characteristic size of the target will not matter,
and will be fixed to be of order~$1$.

\begin{figure}[!h]
\centering\includegraphics[width=9cm]{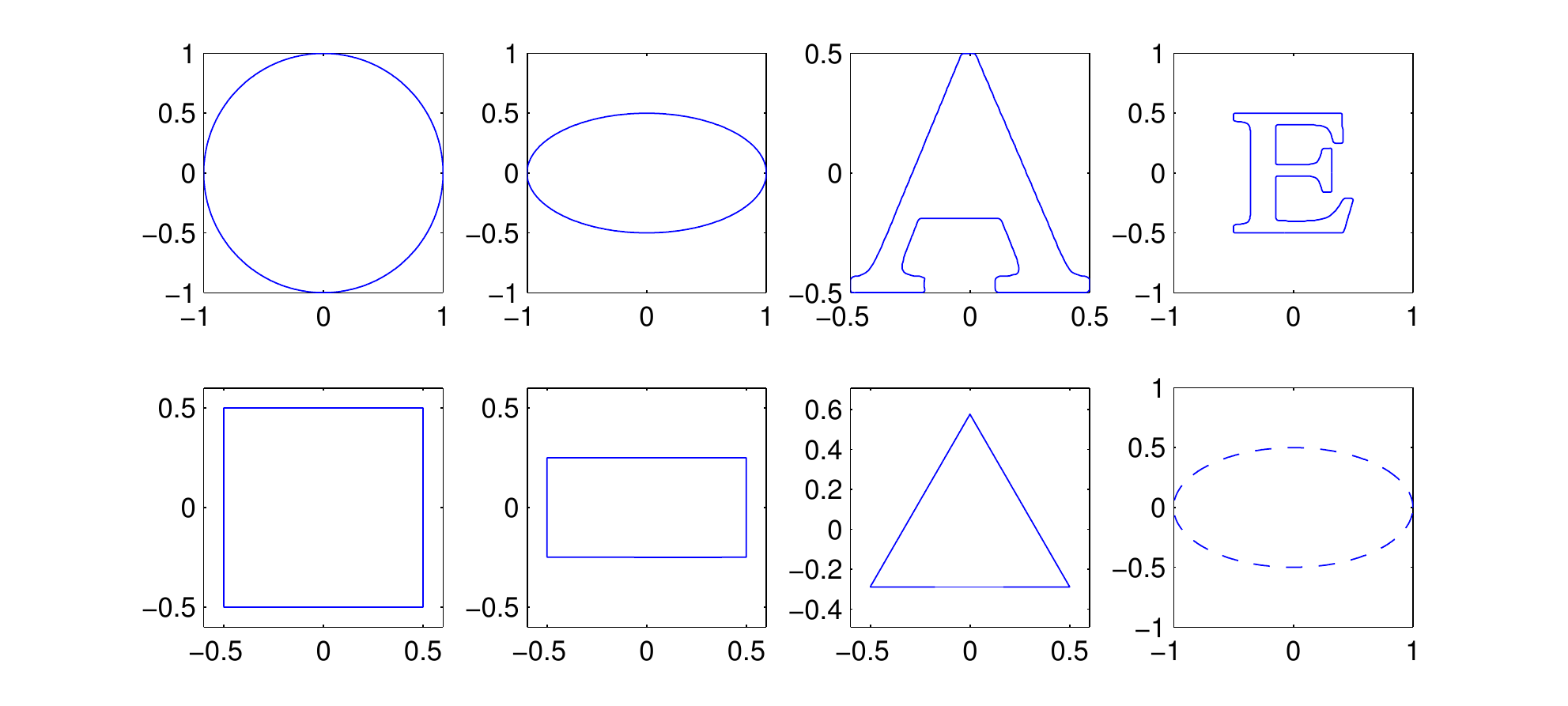}

\caption{\label{fig:dico}The 8 elements of the dictionary. The
dotted lines indicate a target with different electrical
parameters.}

\end{figure}

\subsubsection{Measurements}

In each numerical experiment, one target of the dictionary is
placed at the origin, while the fish swims around it. As it has
been mentioned, we consider two different shapes for the fish's
body: ellipses and twisted ellipses. The measured data is built
taking $20$ positions of the fish around the target (see
Fig.~\ref{fig:Setup}).
\begin{figure}[!h]
\centering%
\begin{tabular}{cccc}
\includegraphics[width=2cm]{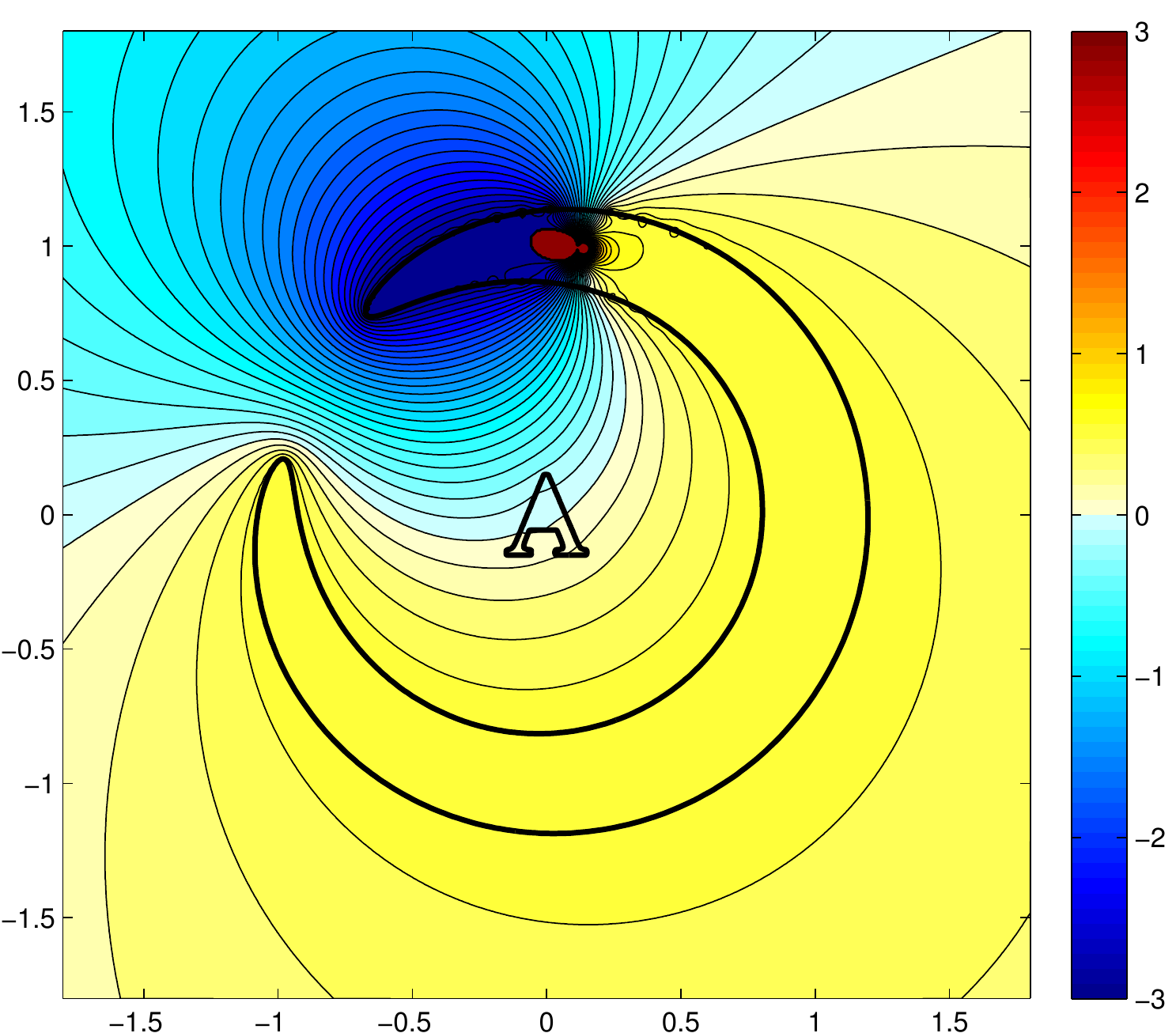} & \includegraphics[width=2cm]{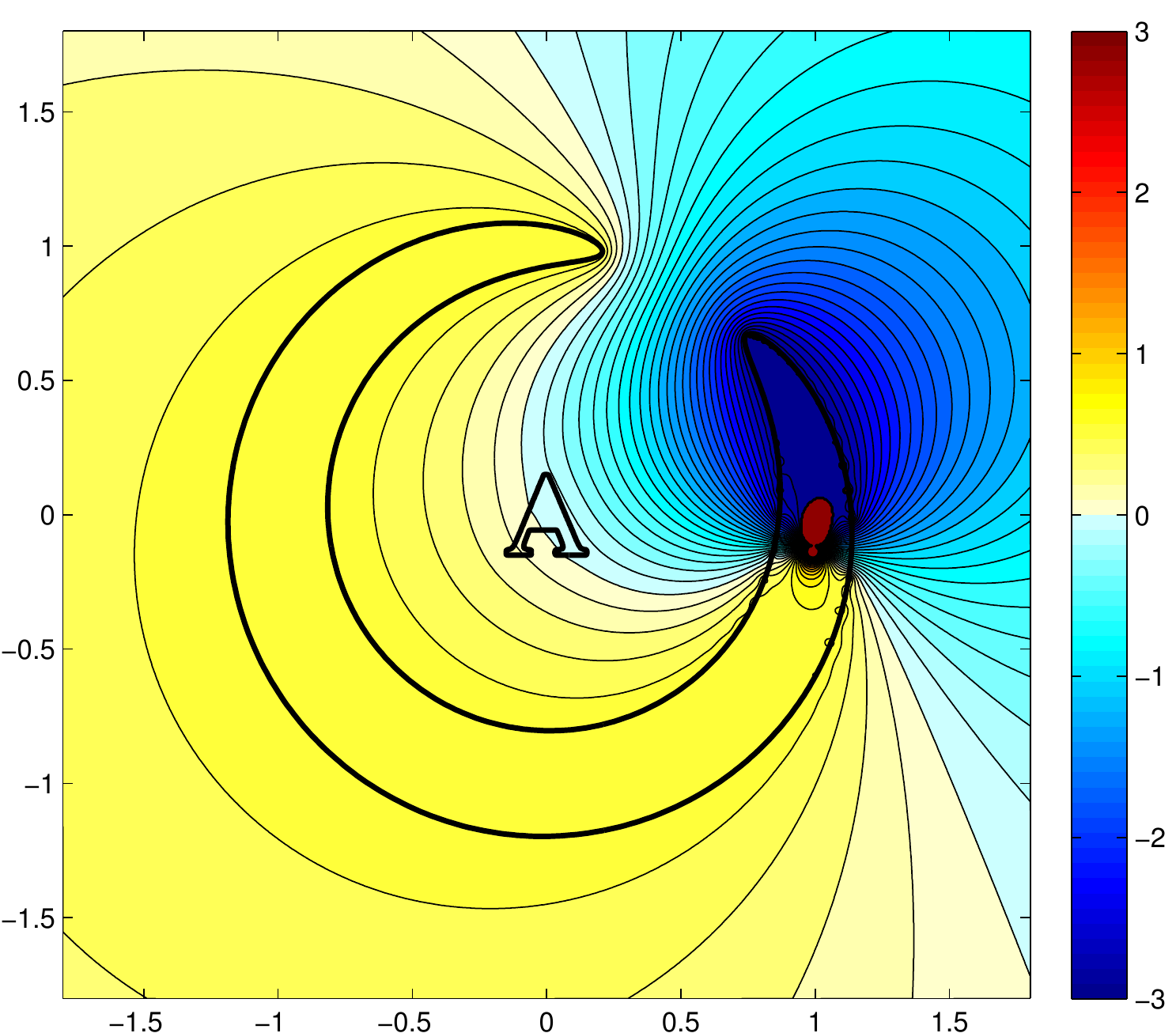} &\includegraphics[width=2cm]{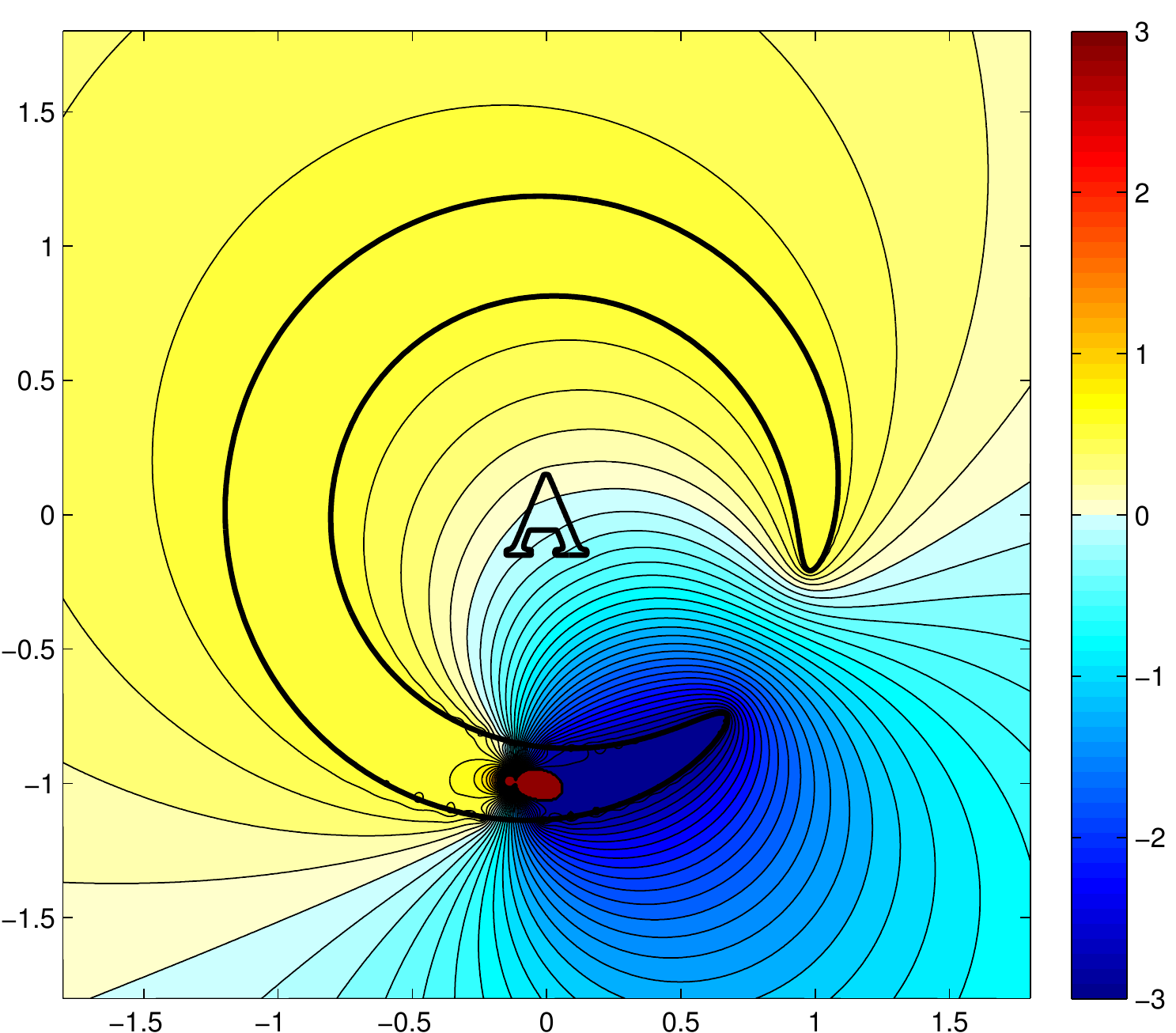}
\\ \tabularnewline
\includegraphics[width=2cm]{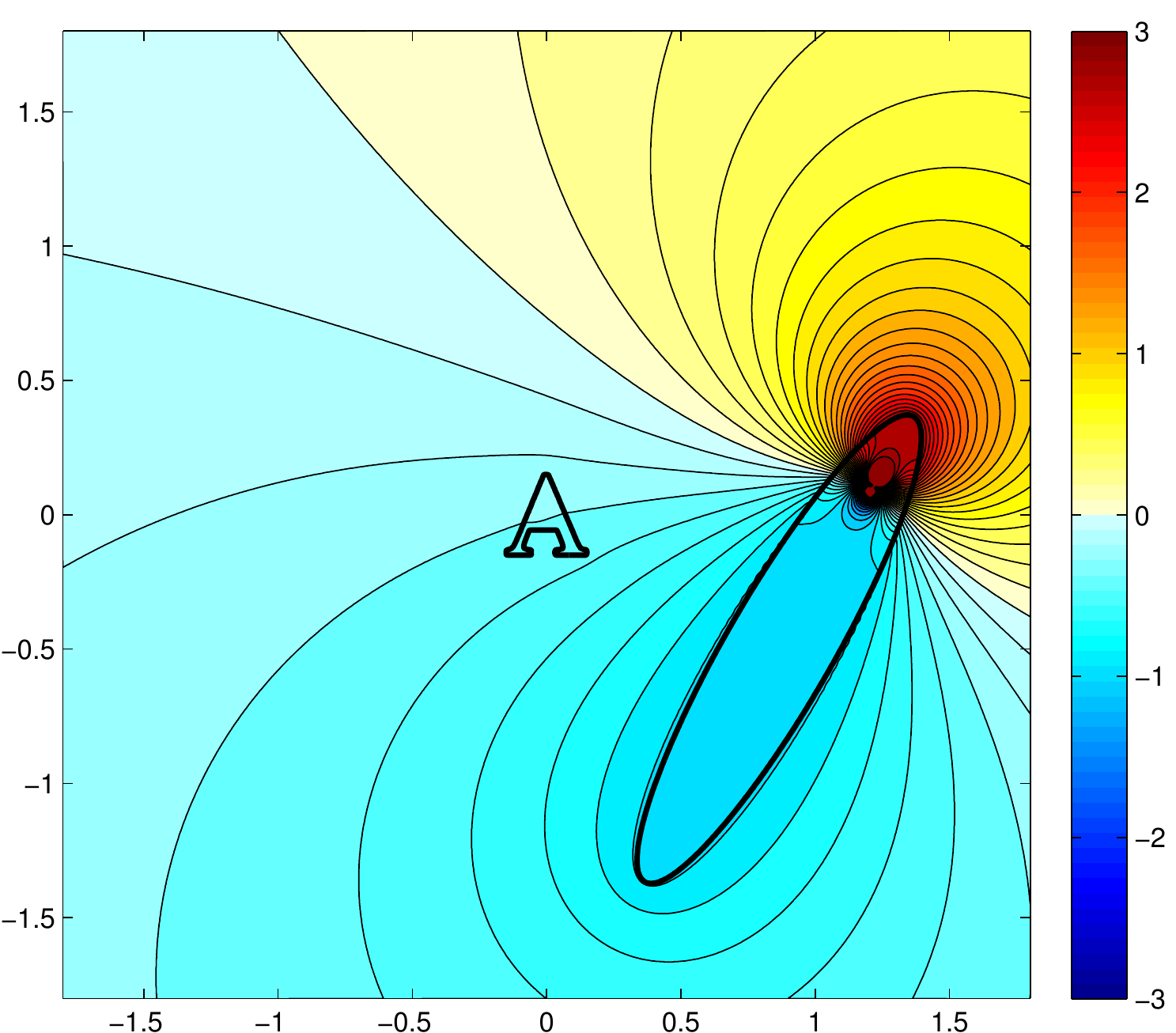} &
\includegraphics[width=2cm]{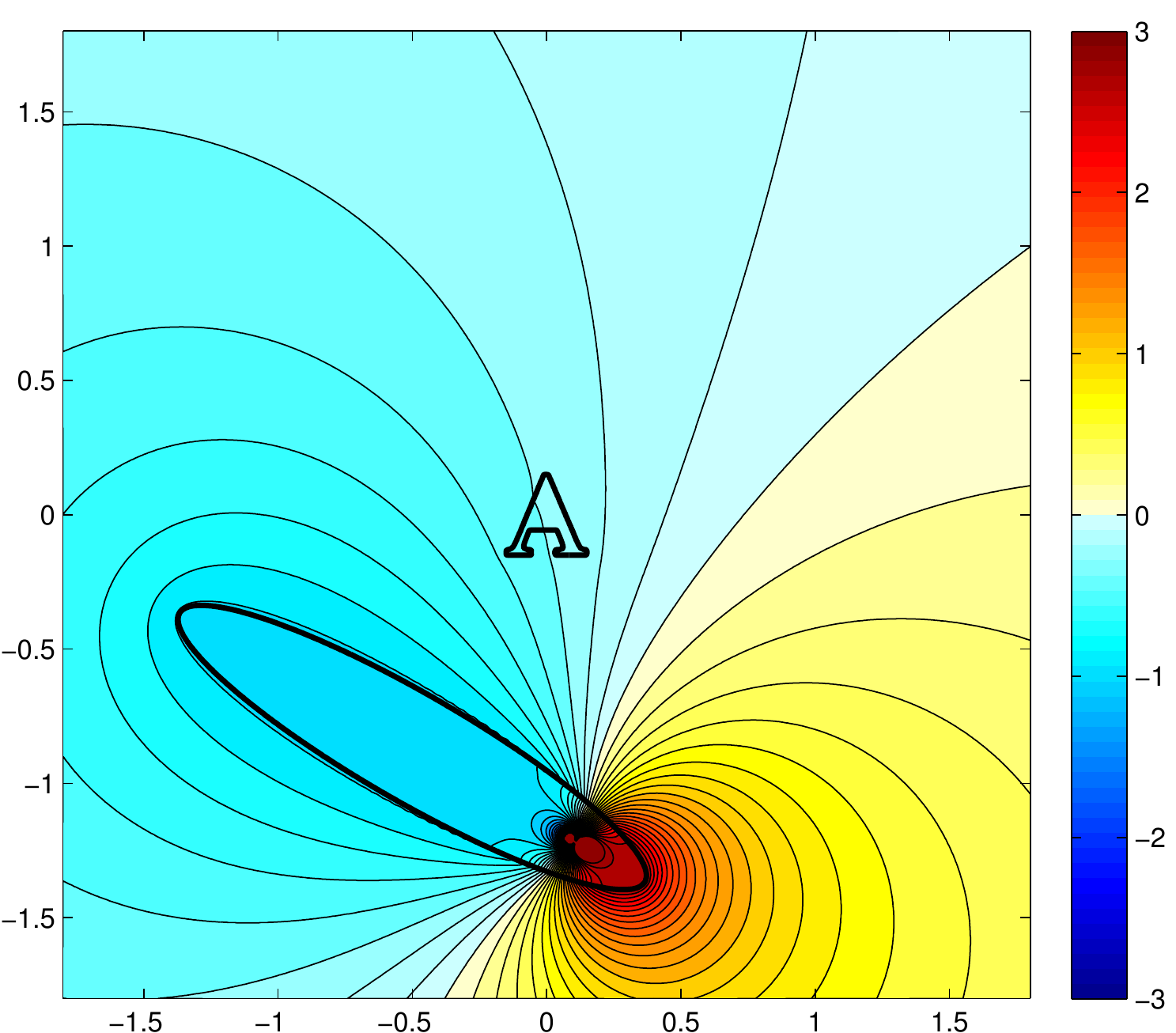} & \includegraphics[width=2cm]{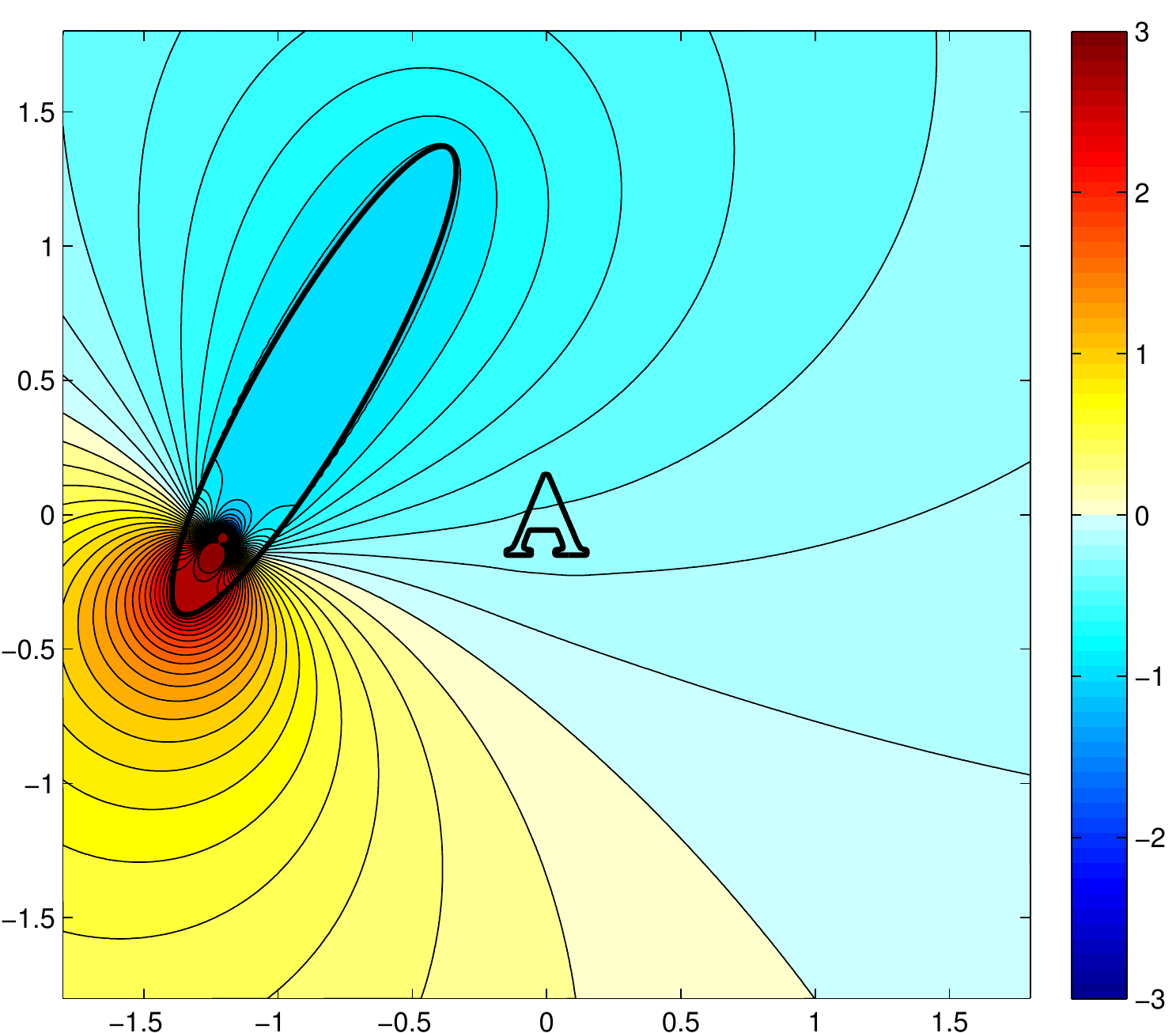}
\tabularnewline
\end{tabular}
\caption{\label{fig:Setup}Two different kinds of experiences
involve (on the top) a twisted-ellipse shape or (on the bottom)
ellipse shape. The real part of the electric field is plotted, for
$3$ (out of $20$) positions that the fish takes around the target
(placed at the origin).}
\end{figure}

In Fig. \ref{fig:aperture}, a smaller aperture than the one in
Fig. \ref{fig:Setup} is considered.

\begin{figure}
\centering
\includegraphics[width=5.2cm, height=4.1cm]{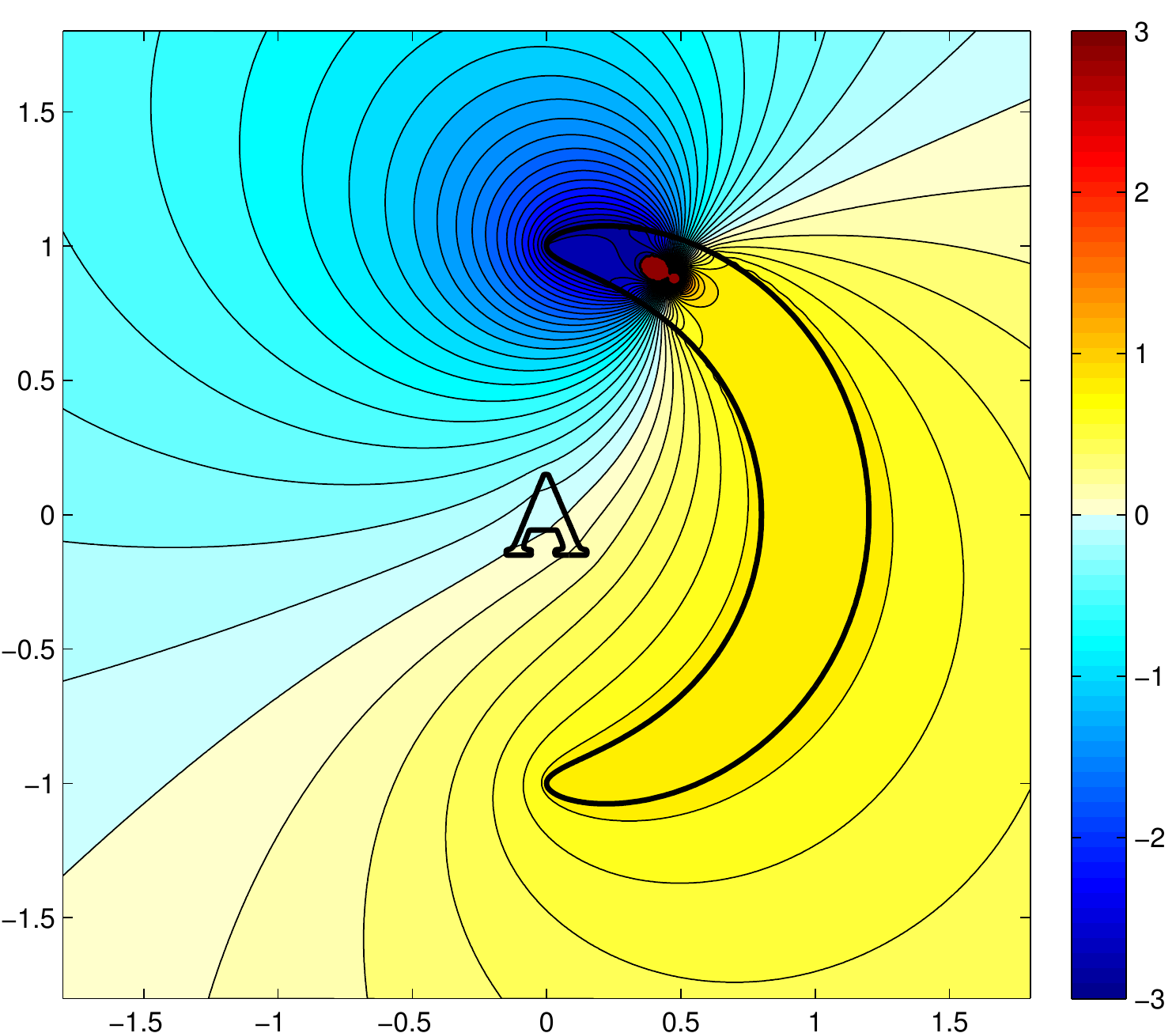}
\caption{\label{fig:aperture} A fish having the shape of a twisted
ellipse with a smaller aperture than in Fig. \ref{fig:Setup}.}
\end{figure}

The typical size of the target is $\delta=0.3$ while the fish is
turning around a disk of radius $R=1$; the twisted ellipse's
semi-axes are $a=1.8$ and $b=0.2$ while the straight ellipse's
semi axis are $a=1$ and $b=0.2$. The effective thickness  of the
skin is set at $\xi=0$. The fish has $2^{7}$ receptors uniformly
distributed on its skin, and the electric organ emits $F=10$
frequencies, equally distributed from $\omega_{0}:=1$ to
$F\omega_{0}$. Again, we refer to \cite{electroloc} for
nondimensionalization of the underlying model equations with
proper quantities and realistic values.

\subsubsection{Classification}

The recognition process is as follows. When measurements are
acquired, we perform least-square reconstruction of the (first- or
second-order) CGPT of the targets. From this CGPT, we compute
quantities of interest $q$ (\emph{i.e.} Shape Descriptors or
singular values or ratios of singular values). Then, for each
element $n$ in the dictionary $\mathcal{D}$, we compute
$\norm{q-q_n}$, where $q_n$ is the -~pre-computed~- quantity of
interest for the $n$th shape. This leads us to charts such as the
ones presented in Figs.~\ref{fig:results-SD-banana},
\ref{fig:results-SD-ellipse}, and \ref{fig:results-aperture}.

  \begin{itemize}
    \item[]{Framework for algorithms of multifrequency classification:}
    \begin{algorithmic}[1]
      \STATE Input: the quantities of interest $\left(q^{(f)}\right)_{1\leq f\leq F}$ calculated from
      the measurement of an
      unknown shape $D$;
      \FOR {$B_n\in \mathcal{D}$}
      \STATE $e_n\leftarrow \sum_{1\leq f \leq F} \| q(B_n)^{(f)} - q^{(f)} \|^2 $ where $\left( q(B_n)^{(f)}
      \right)_{1\leq f \leq F}$ is the same type of quantities of interest of the shape $B_n$;
      \STATE $n\leftarrow n+1$;
      \ENDFOR
      \STATE Output: the true dictionary element $n^*\leftarrow \mbox{argmin}_n e_n$.
    \end{algorithmic}
  \end{itemize}

\subsubsection{Stability analysis}

First, let us explain what kind of noise is considered. We will
add a random matrix (with Gaussian entries) to the data matrix $
$$\mathbb{Q}$ defined in~(\ref{eq:def-Q}). More precisely, we will
consider
\[
\tilde{\mathbb{Q}}:=\mathbb{Q}+\varepsilon\mbox{\ensuremath{\mathbb{W}}},
\]
where $\mathbb{W}$ is a $S\times R$ matrix whose coefficients
follow a Gaussian distribution with mean $0$ and variance $1$. The
real number $\varepsilon$ is the strength of the noise, and will
be given in percentage of the fluctuations of $\mathbb{Q}$, ({\it
i.e.},  $\max_{s,r}Q_{sr}-\min_{s,r}Q_{sr}$). The recognition
procedure remains the same.

Stability analysis was then carried out empirically: for each
noise level, we performed $N_{\textrm{stabil}}$ independent
recognition
 process (with $N_{\textrm{stabil}}$ being precised for each experiment), and computed the ratio of good detection. The computation ends when we reach the threshold of $12.5\%$ probability of detection that corresponds to a
 uniform random choice of the object. This gives us Figs.~\ref{fig:stability-SD-mono} to
 \ref{fig:results-imag-ellipse-ratio}.

\subsection{Results and discussion}

\label{sub:results}

In this subsection, we compare the respective stability of our algorithms. Due to the large number of situations and computations, only significant figures were plotted, giving the following classification of recognition processes, according to their range of application.

\subsubsection{Fixed frequency setting: shape descriptors}

If only one frequency is accessible for the measurements, then the
only algorithm possible is classification based on shape
descriptors. Indeed, first-order PTs are not enough to
discriminate between objects. However, the use of shape
descriptors is limited to the twisted-ellipse case with nearly
full aperture (see Figs.~\ref{fig:results-SD-ellipse} and
~\ref{fig:results-aperture}, where some targets are not
recognized, such as the ellipse in
Fig.~\ref{fig:results-SD-ellipse} or the disk in
Fig.~\ref{fig:results-aperture}). Moreover, its stability with
respect to measurement noise is quite poor (see
Fig.~\ref{fig:stability-SD-mono}).

\subsubsection{Multifrequency setting: spectral content of PTs}

In the case of multifrequency measurements, shape descriptors do
not increase their stability enough compared to singular values
(see Fig.~\ref{fig:stability-SD-multi}). Hence, it is better to
use singular values of the PTs (see
Figs.~\ref{fig:results-multi-complete-banana} to
\ref{fig:results-multi-ratio-ellipse}). One can see that:
\begin{itemize}
\item considering all singular values
(Figs.~\ref{fig:results-multi-complete-banana} and
\ref{fig:results-multi-complete-ellipse}) is much more stable than
considering ratios of singular values
(Figs.~\ref{fig:results-multi-ratio-banana} and
\ref{fig:results-multi-ratio-ellipse}); \item the aperture does
not change very much the stability.
\end{itemize}
In this regard, the most stable algorithm is to recognize all
singular values when the fish is a twisted ellipse
(Fig.~\ref{fig:results-multi-complete-banana}), leading us to a
probability of detection superior to $90\%$ with noise level of
$125\%$. This is a huge gap when compared to the recognition
process with shape descriptors, allowing only a few percents of
noise. Note that the noise level is computed with respect to the
perturbation in the measurements $\mathbb{Q}$ given by
(\ref{eq:def-Q}), which is of order of the target volume, see
(\ref{eq:Qsr}). Hence, a noise level of $125\%$ remains small
compared to the actual transdermal potential $u$.

\subsubsection{Background field elimination}

We can see in Figs.~\ref{fig:results-imag-ellipse-complete} and
\ref{fig:results-imag-ellipse-ratio} that taking the imaginary
part of the measurements in order to avoid the computation of the
background field does not significantly decrease the stability of
the reconstruction based on spectral content. Since the
reconstruction of CGPTs is very fast, the background field
elimination technique would yield to real-time imaging.

\section{Concluding remarks}

In this paper, we have successfully exhibited the physical
mechanism underlying shape recognition and classification in
active electrolocation. We have shown that extracting generalized
polarization tensors from the data and comparing invariants with
those of learned elements in a dictionary yields a classification
procedure with a good performance in the full-view case and with
moderate measurement noise level. However, this shape descriptor
based classification is instable in the limited-view case and for
higher noise level. When measurements at multiple frequencies are
used, the stability of our classification approach is
significantly improved by using phase shifts and keeping only the
first-order polarization tensor. The resulting spectral induced
polarization based classification is very robust.

Our results open the door for the application of the extended
Kalman filter developed in~\cite{ammari2012tracking} to show the
feasibility of a tracking of both location and orientation of a
target from perturbations of the electric field on the skin
surface of the fish. It also remains to understand to what extent
the spectral induced polarization approach could help us retrieve
the electric parameters of the target or locate and recognize
multiple targets.

\begin{acknowledgments} This work was supported by the ERC Advanced Grant Project
MULTIMOD--267184. \end{acknowledgments}

\bibliographystyle{plain} \bibliographystyle{plain} \bibliographystyle{plain}
\bibliography{bib_tracking}

\begin{thebibliography}{10}

\bibitem{electroloc}
H.~Ammari, T.~Boulier, and J.~Garnier.
\newblock Modeling active electrolocation in weakly electric fish.
\newblock {\em To appear in SIAM Journal on Imaging Sciences}, 2012.

\bibitem{dico}
H.~Ammari, T.~Boulier, J.~Garnier, W.~Jing, H.~Kang, and H.~Wang.
\newblock Target identification using dictionary matching of generalized
  polarization tensors.
\newblock {\em Submitted to Foundations of Computational Mathematics
  (arXiv:1204.3035)}, 2012.

\bibitem{ammari2012tracking}
H.~Ammari, T.~Boulier, J.~Garnier, H.~Kang, and H.~Wang.
\newblock Tracking of a mobile target using generalized polarization tensors.
\newblock {\em Submitted to SIAM Journal on Imaging Sciences
  (arXiv:1212.3544)}, 2012.

\bibitem{invariants3D}
H.~Ammari, D.~Chung, H.~Kang, and H.~Wang.
\newblock Invariance properties of generalized polarization tensors and design
  of shape descriptors in three dimensions.
\newblock {\em Submitted to Appl. Comput. Harmon. Anal. (arXiv:1212.3519)},
  2012.

\bibitem{AGKLY11}
H.~Ammari, J.~Garnier, H.~Kang, M.~Lim, and S.~Yu.
\newblock Generalized polarization tensors for shape description.
\newblock {\em To appear in Numerische Mathematik}, 2012.

\bibitem{ammarisima02}
H.~Ammari and H.~Kang.
\newblock High-order terms in the asymptotic expansions of the steady-state
  voltage potentials in the presence of conductivity inhomogeneities of small
  diameter.
\newblock {\em SIAM J. Math. Anal.}, 34(5):1152--1166, 2003.

\bibitem{mms}
H.~Ammari and H.~Kang.
\newblock Properties of generalized polarization tensors.
\newblock {\em SIAM Multiscale Model. Simul.}, 1:335--348, 2003.

\bibitem{LNM1846}
H.~Ammari and H.~Kang.
\newblock {\em Reconstruction of small inhomogeneities from boundary
  measurements}, volume 1846 of {\em Lecture Notes in Mathematics}.
\newblock Springer-Verlag, Berlin, 2004.

\bibitem{ammarikang2007GPT}
H.~Ammari and H.~Kang.
\newblock {\em Polarization and moment tensors}, volume 162 of {\em Applied
  Mathematical Sciences}.
\newblock Springer, New York, 2007.
\newblock With applications to inverse problems and effective medium theory.

\bibitem{seo1}
H.~Ammari, O.~Kwon, J.K. Seo, and E.J. Woo.
\newblock T-scan electrical impedance imaging system for anomaly detection.
\newblock {\em SIAM J. Appl. Math.}, 65:252--266, 2004.

\bibitem{assad1997electric}
C.~Assad.
\newblock {\em Electric field maps and boundary element simulations of
  electrolocation in weakly electric fish}.
\newblock PhD thesis, California Institute of Technology, 1997.

\bibitem{babineau2006modeling}
D.~Babineau, A.~Longtin, and J.E. Lewis.
\newblock Modeling the electric field of weakly electric fish.
\newblock {\em Journal of experimental biology}, 209(18):3636, 2006.

\bibitem{bastian}
J.~Bastian.
\newblock Electrolocation i. how the electroreceptors of {\it apteronotus
  albifrons} code for moving objects and other electrical stimuli.
\newblock {\em J Comp. Physiol. A}, 144:397--411, 1981.

\bibitem{boyer4}
F.~Boyer, P-B. Gossiaux, B.~Jawad, V.~Lebastard, and M.~Porez.
\newblock Model for a sensor inspired by electric fish.
\newblock {\em IEEE Transactions on Robotics}, 52(2):492--505, 2012.

\bibitem{budelli2000electric}
R.~Budelli and A.A. Caputi.
\newblock The electric image in weakly electric fish: perception of objects of
  complex impedance.
\newblock {\em Journal of Experimental Biology}, 203(3):481, 2000.

\bibitem{neuro}
T.H. Bullock, C.D. Hopkins, A.N. Popper, and R.F. Richard, editors.
\newblock {\em Electroreception}.
\newblock Springer-Verlag, New York, 2005.

\bibitem{chen2005modeling}
L.~Chen, J.L. House, R.~Krahe, and M.E. Nelson.
\newblock Modeling signal and background components of electrosensory scenes.
\newblock {\em Journal of Comparative Physiology A: Neuroethology, Sensory,
  Neural, and Behavioral Physiology}, 191(4):331--345, 2005.

\bibitem{mciver3}
O.M. Curet, N.A. Patankar, G.V. Lauder, and M.A. MacIver.
\newblock Aquatic maneuvering with counter-propagating waves: a novel
  locomotive strategy.
\newblock {\em J. Royal Soc. Interface}, 8:1041--1050, 2010.

\bibitem{boyer}
B.~Jawad, P.B. Gossiaux, F.~Boyer, V.~Lebastard, F.~Gomez, N.~Servagent,
  S.~Bouvier, A.~Girin, and M.~Porez.
\newblock Sensor model for the navigation of underwater vehicles by the
  electric sense.
\newblock In {\em Proc. of 2010 IEEE International Conference on Robotics and
  Biomimetics (ROBIO)}, pages 879--884, 2010.

\bibitem{seo2}
S.~Kim, J.~Lee, J.K. Seo, E.J. Woo, and H.~Zribi.
\newblock Multifrequency trans-admittance scanner: mathematical framework and
  feasibility.
\newblock {\em SIAM J. Appl. Math.}, 69:22--36, 2008.

\bibitem{boyer3}
V.~Lebastard, Ch. Chevallereau, A.~Amrouche, B.~Jawad, A.~Girin, F.~Boyer, and
  P-B. Gossiaux.
\newblock Underwater robot navigation around a sphere using electrolocation
  sense and kalman filter.
\newblock In {\em 2010 IEEE/RSJ International Conference on Intelligent Robots
  and Systems (IROS)}, pages 4225--4230, 2010.

\bibitem{lissmann1958mechanism}
H.W. Lissmann and K.E. Machin.
\newblock The mechanism of object location in gymnarchus niloticus and similar
  fish.
\newblock {\em Journal of Experimental Biology}, 35(2):451, 1958.

\bibitem{maciver2001computational}
M.A. Maciver.
\newblock {\em The computational neuroethology of weakly electric fish: body
  modeling, motion analysis, and sensory signal estimation}.
\newblock PhD thesis, Citeseer, 2001.

\bibitem{maciver2001prey}
M.A. MacIver, N.M. Sharabash, and M.E. Nelson.
\newblock Prey-capture behavior in gymnotid electric fish: motion analysis and
  effects of water conductivity.
\newblock {\em Journal of Experimental Biology}, 204(3):543, 2001.

\bibitem{moller1995}
P.~Moller.
\newblock {\em Electric fish: history and behavior}.
\newblock Chapman and Hall, London, 1995.

\bibitem{book_nelson}
M.E. Nelson.
\newblock {\em Target Detection, Image Analysis, and Modeling}.
\newblock Springer-Verlag, New York, 2005.

\bibitem{boyer2}
M.~Porez, V.~Lebastard, A.J. Ijspeert, and F.~Boyer.
\newblock Multi-physics model of an electric fish-like robot: Numerical aspects
  and application to obstacle avoidance.
\newblock In {\em Proc. of 2011 IEEE/RSJ International Conference on
  Intelligent Robots and Systems (IROS)}, pages 1901--1906, 2011.

\bibitem{mciver2}
C.M. Postlethwaite, T.M. Psemeneki, J.~Selimkhanov, M.~Silber, and M.A.
  MacIver.
\newblock Optimal movement in the prey strikes of weakly electric fish: A case
  study of the interplay of body plan and movement capability.
\newblock {\em J. Royal Soc. Interface}, 6:417--433, 2009.

\bibitem{rasnowformula}
B.~Rasnow.
\newblock The effects of simple objects on the electric field of apteronotus.
\newblock {\em J Comp. Physiol. A}, 178:397--411, 1996.

\bibitem{rasnow1989simulation}
B.~Rasnow, C.~Assad, M.E. Nelson, and J.M. Bower.
\newblock Simulation and measurement of the electric fields generated by weakly
  electric fish.
\newblock In {\em Advances in neural information processing systems 1}, pages
  436--443. Morgan Kaufmann Publishers Inc., 1989.

\bibitem{scholz2002towards}
B.~Scholz.
\newblock Towards virtual electrical breast biopsy: space-frequency music for
  trans-admittance data.
\newblock {\em IEEE Transactions on Medical Imaging}, 21(6):588--595, 2002.

\bibitem{mciver}
J.R. Solberg, K.M. Lynch, and M.A. MacIver.
\newblock Active electrolocation for underwater target localization.
\newblock {\em Internat. J. Robotics Res.}, 27(5):529--548, 2008.

\bibitem{von1999active}
G.~Von~der Emde.
\newblock Active electrolocation of objects in weakly electric fish.
\newblock {\em Journal of experimental biology}, 202(10):1205, 1999.

\bibitem{gerhard}
G.~von~der Emde.
\newblock Distance and shape: perception of the 3-dimensional world by weakly
  electric fish.
\newblock {\em Journal of Physiology Paris}, 98:67--80, 2004.

\bibitem{von2007distance}
G.~von~der Emde and S.~Fetz.
\newblock Distance, shape and more: recognition of object features during
  active electrolocation in a weakly electric fish.
\newblock {\em Journal of Experimental Biology}, 210(17):3082, 2007.

\bibitem{von1993electric}
G.~Von~der Emde, S.~Schwarz, L.~Gomez, R.~Budelli, and K.~Grant.
\newblock Electric fish measure distance in the dark.
\newblock {\em Science}, 260:1617--1623, 1993.

\end{thebibliography}

\appendix

In this appendix, we numerically illustrate the main findings in
this paper and show the potential of electro-sensing for shape
recognition and classification.

\begin{figure}
\centering
\includegraphics[width=8cm, height=4.6cm]{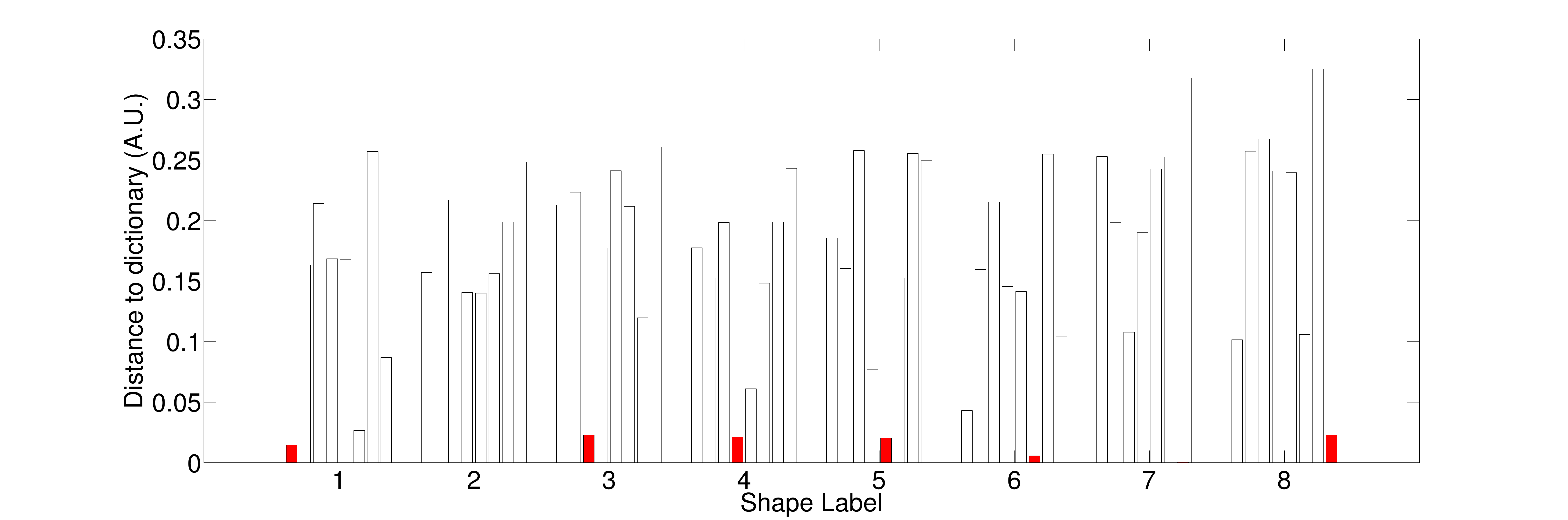}
\caption{\label{fig:results-SD-banana}Results of the matching with
the dictionary (differences between shape descriptors), when the
fish is  a twisted ellipse. The red bar represents the bar of the
target being identified. In the $x$-coordinates, 1 stands for the
disk, 2 for the ellipse, 3 for the letter A, 4 for the letter E, 5
for the square, 6 for the rectangle, 7 for the triangle and 8 for
the ellipse with different electrical parameters. In the
$y$-coordinates, the distance between the shape descriptor of the
target -~computed from measurements~- and the shape descriptors of
the dictionary. }
\end{figure}

\begin{figure}
\centering
\includegraphics[width=8cm, height=4.6cm]{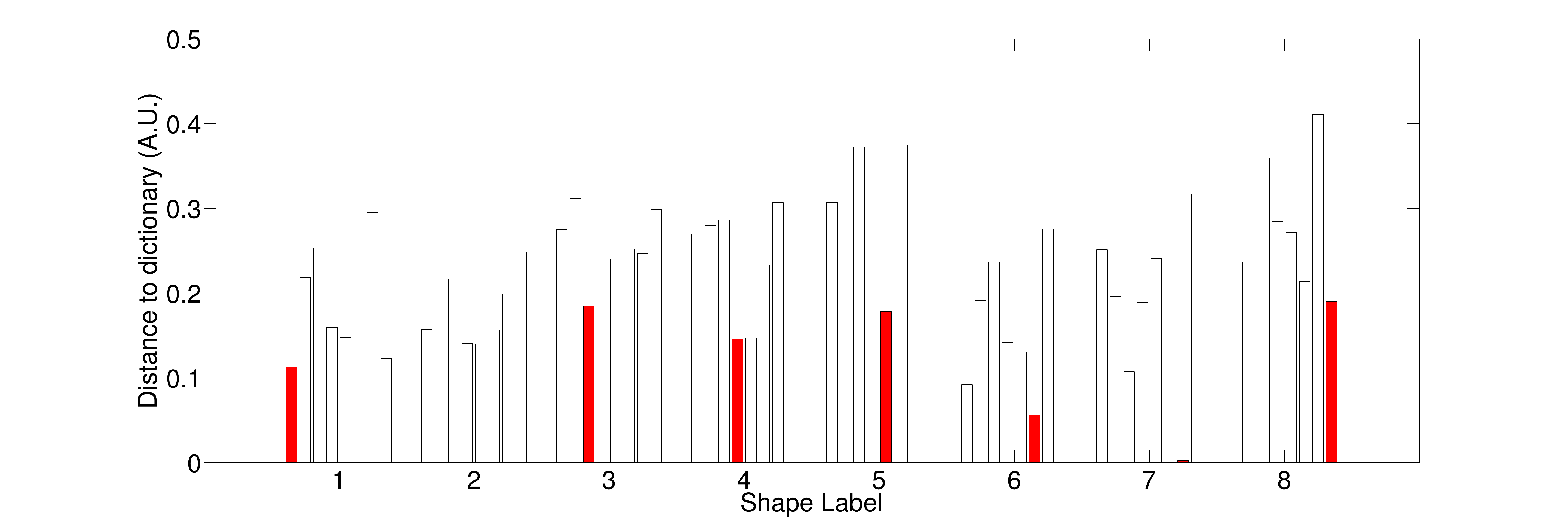}
\caption{Same results as in~\ref{fig:results-SD-banana}, with an
ellipse-shaped fish. \label{fig:results-SD-ellipse}}
\end{figure}

\begin{figure}
\centering
\includegraphics[width=8cm, height=4.6cm]{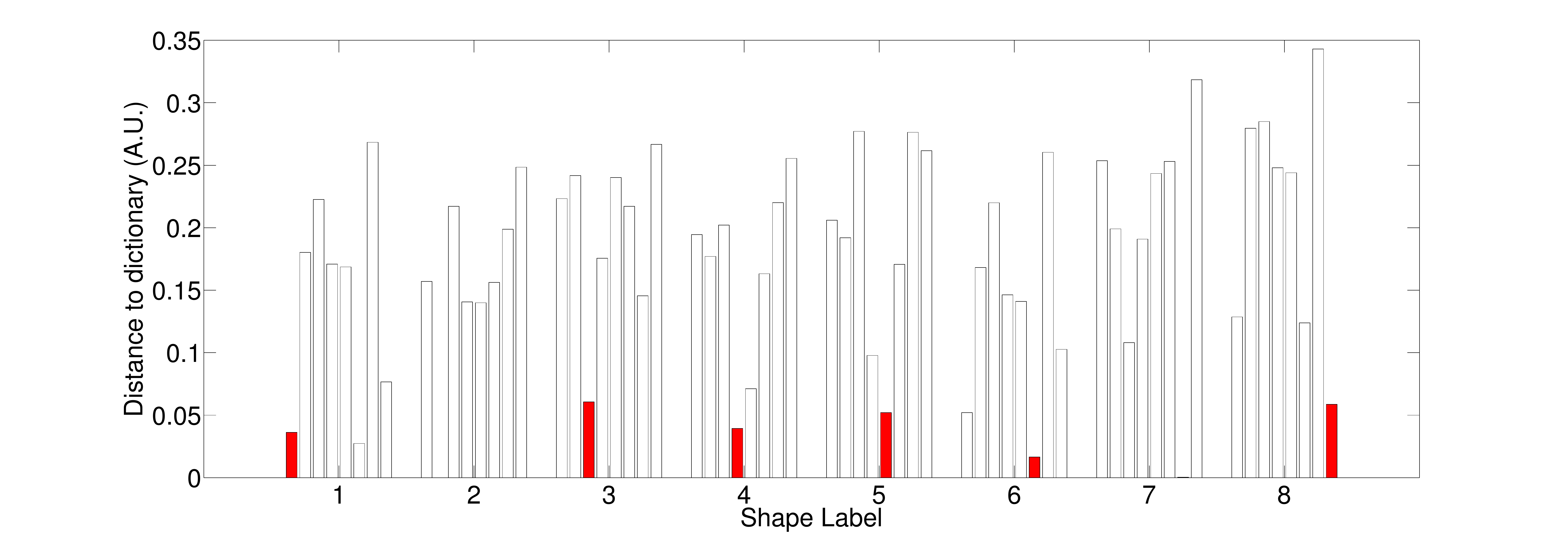}
\caption{\label{fig:results-aperture} Results of the matching with
the dictionary for a twisted ellipse shaped fish with smaller
aperture (see Fig.~\ref{fig:aperture}).}
\end{figure}

\begin{figure}
\centering
\includegraphics[width=8cm, height=4.6cm]{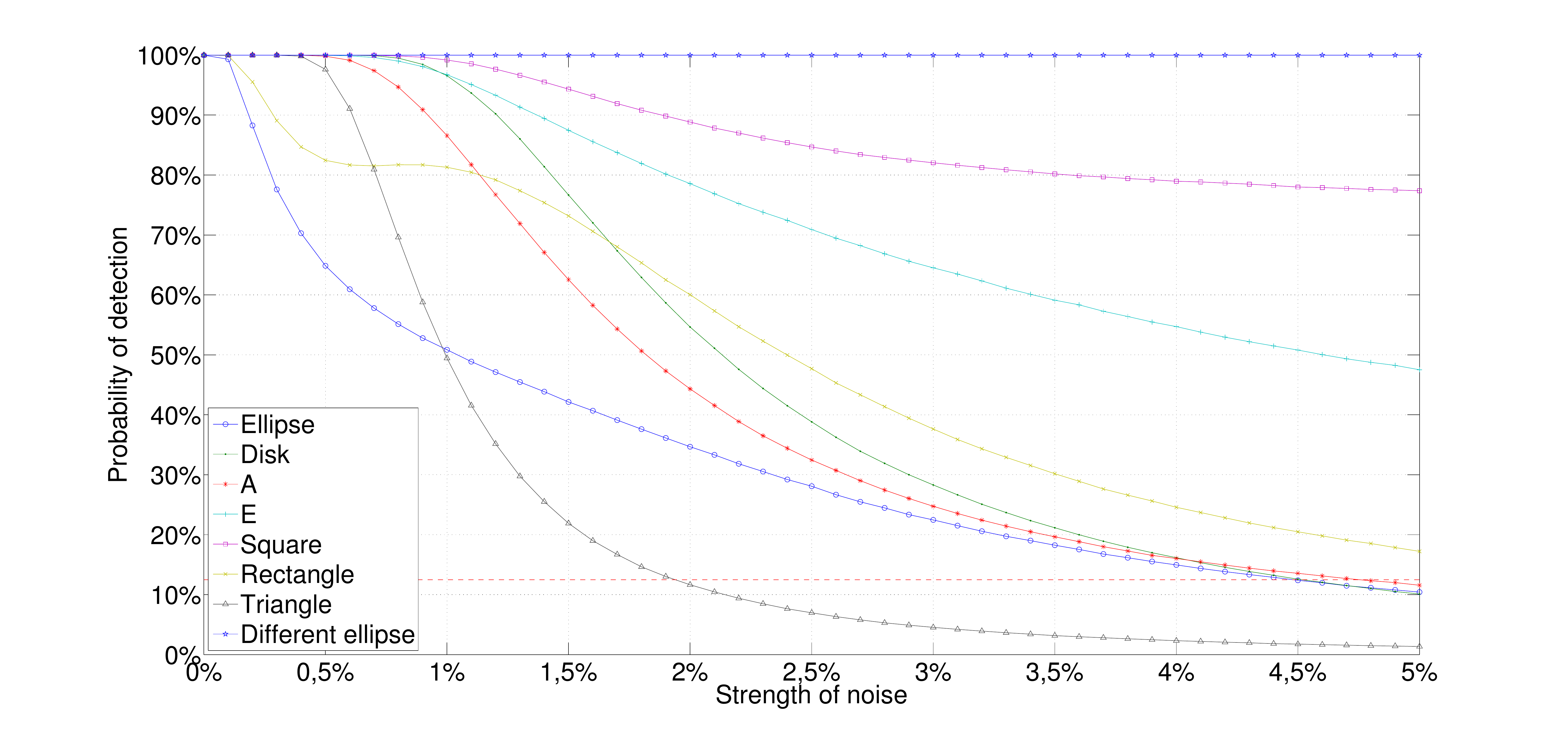}
\caption{\label{fig:stability-SD-mono}Stability of classification
based on Shape Descriptors, when the fish is a twisted ellipse.
Here, only one frequency (the smallest one) is considered. The
threshold of $12.5\%$ that corresponds to a randomly chosen target
is represented in red dotted line. Here,
$N_{\textrm{stabil}}=10^5$.}
\end{figure}

\begin{figure}
\centering
\includegraphics[width=8cm, height=4.6cm]{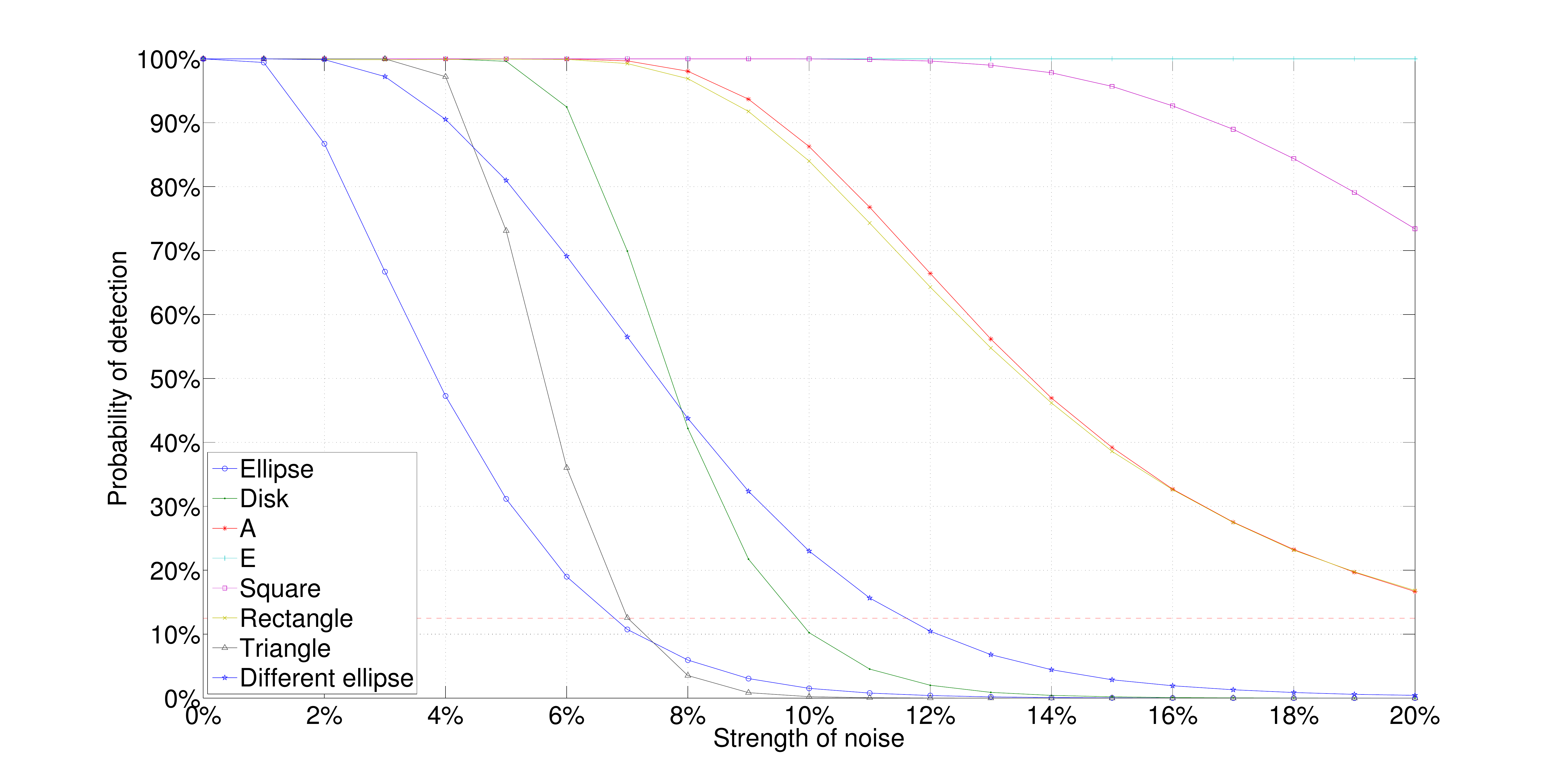}
\caption{\label{fig:stability-SD-multi}Stability of classification
based on multifrequency Shape Descriptors, when the fish is a
twisted ellipse. Here, $N_{\textrm{stabil}}=5.10^4$.}
\end{figure}

\begin{figure}
\centering
\includegraphics[width=8cm, height=4.6cm]{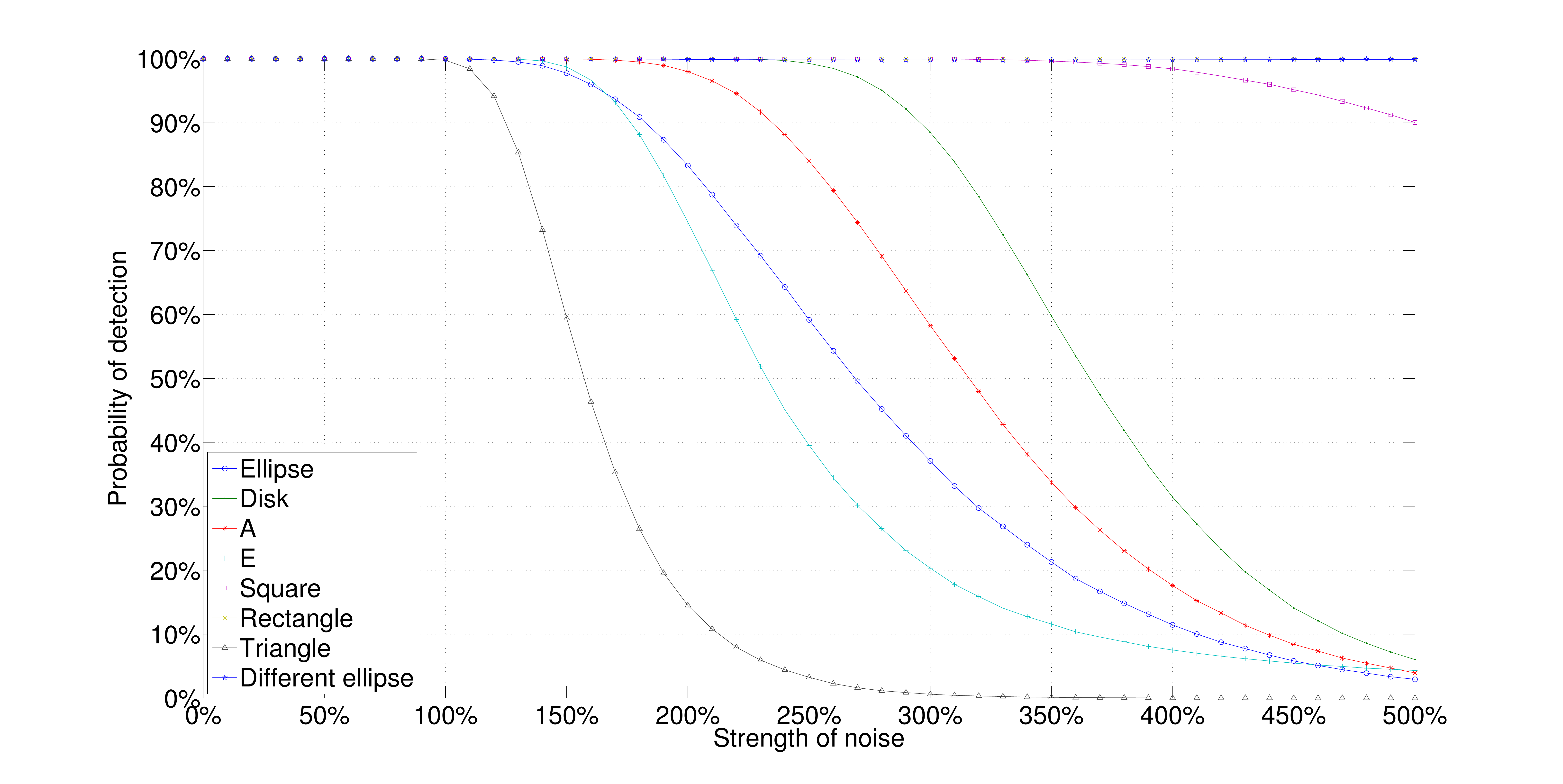}
\caption{\label{fig:results-multi-complete-banana}Stability of
classification based on differences between all singular values of
PTs, when the fish is a twisted ellipse. The characteristic size
of the target is supposed to be known. Here,
$N_{\textrm{stabil}}=5.10^4$.}
\end{figure}

\begin{figure}
\centering
\includegraphics[width=8cm, height=4.6cm]{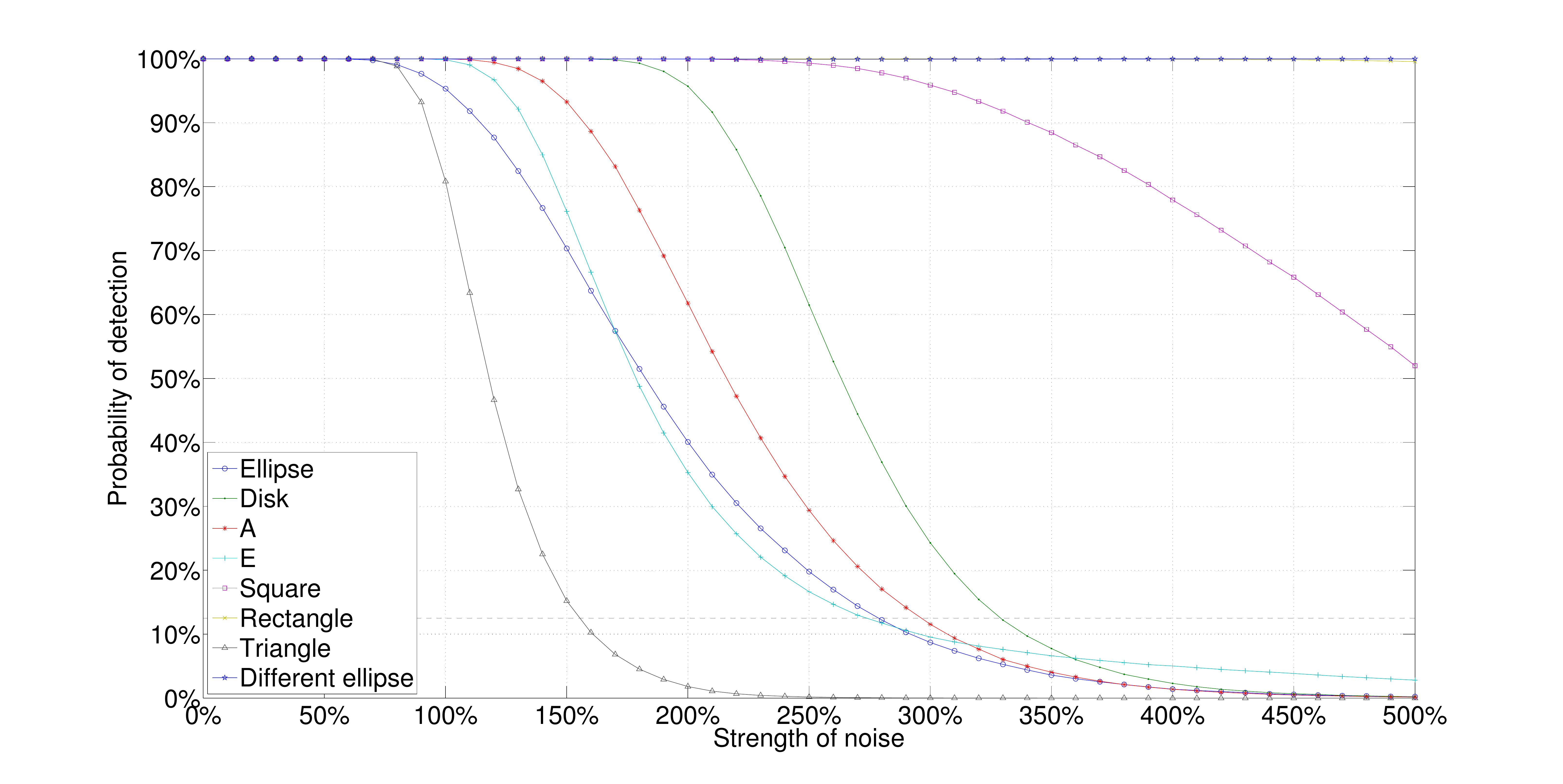}
\caption{\label{fig:results-multi-complete-ellipse}Stability of
classification based on differences between all singular values of
PTs, when the fish is a straight ellipse. The characteristic size
of the target is supposed to be known. Here,
$N_{\textrm{stabil}}=5.10^4$.}
\end{figure}

\begin{figure}
\centering
\includegraphics[width=8cm, height=4.6cm]{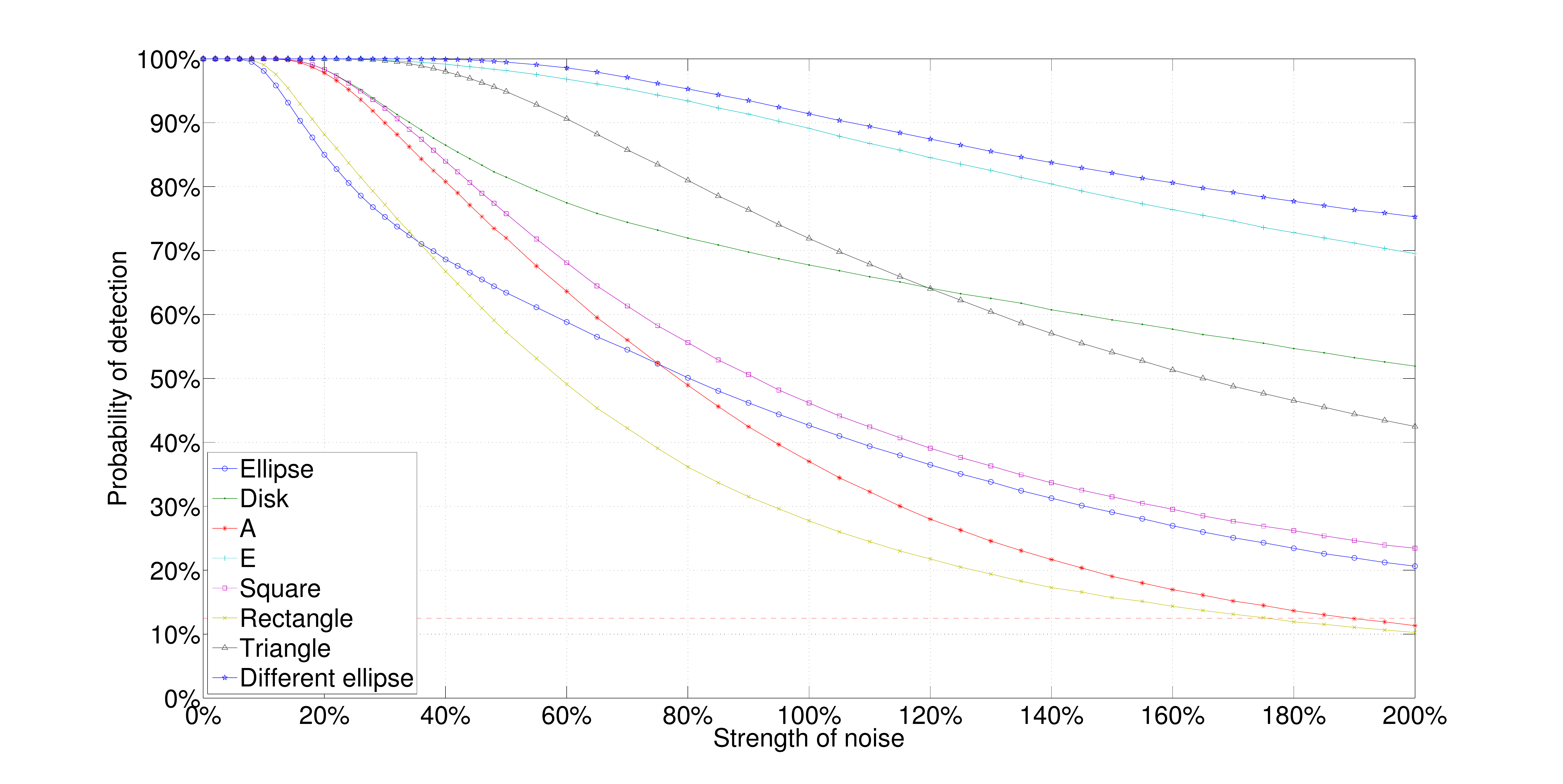}
\caption{\label{fig:results-multi-ratio-banana}Stability of
classification based on differences between ratios of singular
values, when the fish is a twisted ellipse. Here,
$N_{\textrm{stabil}}=5.10^4$.}
\end{figure}

\begin{figure}
\centering
\includegraphics[width=8cm, height=4.6cm]{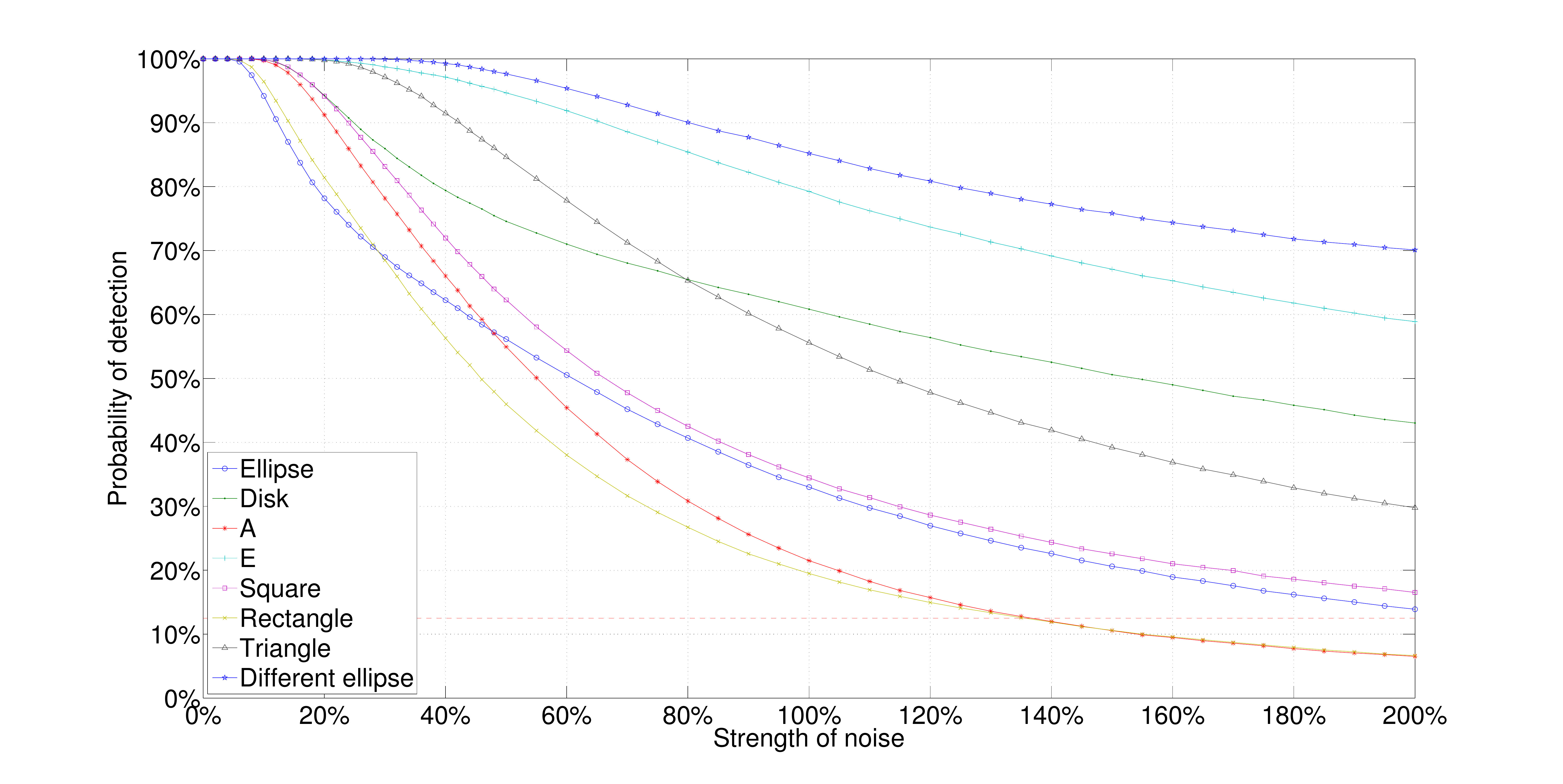}
\caption{\label{fig:results-multi-ratio-ellipse}Stability of
classification based on differences between ratios of singular
values, when the fish is a straight ellipse. Here,
$N_{\textrm{stabil}}=5.10^4$.}
\end{figure}

\begin{figure}
\centering
\includegraphics[width=8cm, height=4.6cm]{ellipse_complete_stability.pdf}
\caption{\label{fig:results-imag-ellipse-complete}Classification
with imaginary part of the PT, when the fish is a straight
ellipse. All singular values are considered to discriminate
between the targets. Here, $N_{\textrm{stabil}}=10^5$.}
\end{figure}

\begin{figure}
\centering
\includegraphics[width=8cm, height=4.6cm]{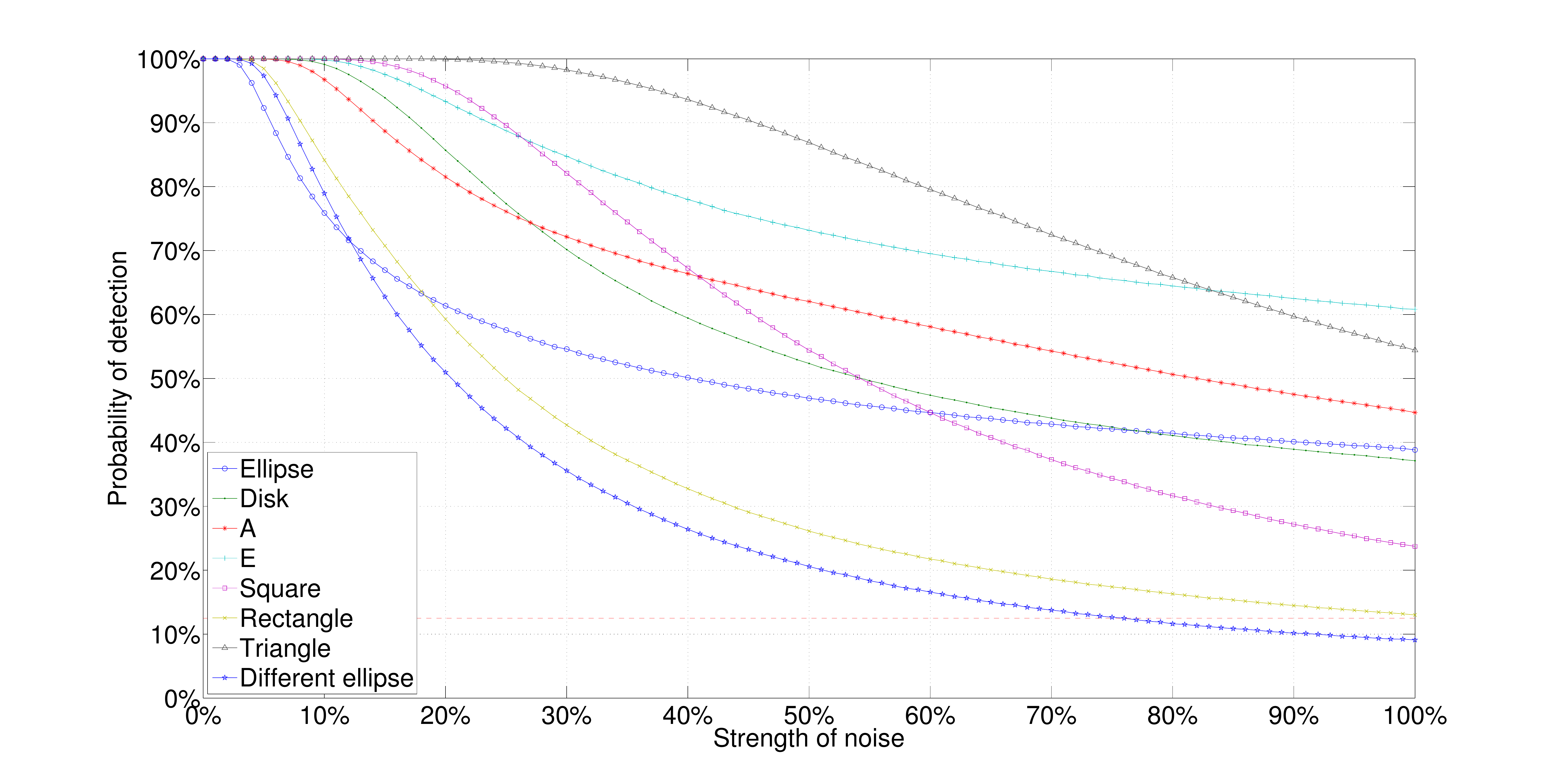}
\caption{\label{fig:results-imag-ellipse-ratio}Classification with
imaginary part of the PT, when the fish is a straight ellipse.
Ratios of singular values are considered to discriminate between
the targets. Here, $N_{\textrm{stabil}}=10^5$.}
\end{figure}

\end{article}
\end{document}